\documentclass[11pt,a4paper]{article}
\usepackage[T1]{fontenc}
\usepackage[utf8]{inputenc}
\usepackage{lmodern,microtype}
\usepackage[margin=25mm]{geometry}
\usepackage{amsmath,amssymb,amsthm,mathtools,braket}
\usepackage{booktabs,graphicx,xcolor}% Figure support and float layout.
\usepackage{tikz}
\usetikzlibrary{arrows.meta,calc,positioning,decorations.pathreplacing}
\definecolor{lcBlue}{HTML}{2463A0}
\definecolor{lcOrange}{HTML}{C66A23}
\definecolor{lcTeal}{HTML}{158477}
\definecolor{lcPurple}{HTML}{82559A}
\definecolor{lcInk}{HTML}{263444}
\definecolor{lcGray}{HTML}{A6B0BA}
\definecolor{lcLight}{HTML}{EDF1F5}
\tikzset{lc base/.style={font=\sffamily\small,text=lcInk,line cap=round,line join=round},
 lc title/.style={font=\sffamily\bfseries\small,anchor=west},
 lc note/.style={font=\sffamily\footnotesize,align=center},
 lc box/.style={draw=lcGray,rounded corners=2pt,fill=white,align=center,inner sep=5pt},
 lc arrow/.style={-{Latex[length=2mm]},draw=lcInk,line width=.7pt},
 lc dot/.style={circle,draw=lcGray,fill=white,inner sep=0pt,minimum size=3.4pt},
 lc aux/.style={rectangle,draw=lcInk,fill=white,inner sep=0pt,minimum size=5pt}}

\usepackage[section]{placeins}
\usepackage{authblk}
\usepackage[font=small,labelfont=bf]{caption}
\usepackage[numbers,sort&compress]{natbib}
\usepackage[colorlinks=true,allcolors=blue]{hyperref}
\usepackage[capitalise,noabbrev]{cleveref}
\newtheorem{theorem}{Theorem}
\newtheorem{proposition}[theorem]{Proposition}
\newtheorem{lemma}[theorem]{Lemma}
\theoremstyle{remark}\newtheorem{remark}[theorem]{Remark}
\newcommand{\F}{\mathbb F}
\newcommand{\GL}{\operatorname{GL}}
\newcommand{\Sp}{\operatorname{Sp}}
\newcommand{\wt}{\operatorname{wt}}
\newcommand{\supp}{\operatorname{supp}}
\newcommand{\im}{\operatorname{im}}

\newcommand{\Span}{\operatorname{span}}
\newcommand{\Sym}{\operatorname{Sym}}
\newcommand{\Xl}{\overline X}
\newcommand{\Zl}{\overline Z}

\newcommand{\src}{\mathrm{src}}

\theoremstyle{definition}\newtheorem{example}[theorem]{Example}
\crefname{example}{Example}{Examples}

\title{Fast computational primitives for La-cross qLDPC codes}

\author[1]{Andrea Di Fini$^{\dagger}$}
\author[1,2]{Tom Hartweg$^{\dagger}$}
\author[1]{Laura Pecorari}
\author[1,2,3]{Guido Pupillo}
\author[2]{Samuel Crew$^{*}$}

\affil[1]{University of Strasbourg and CNRS, CESQ and ISIS (UMR 7006), aQCess, 67000 Strasbourg, France}

\affil[2]{QPerfect SAS, European Center for Quantum Sciences, 23 rue du Loess, 67200 Strasbourg, France}

\affil[3]{Institut Universitaire de France (IUF), 75000 Paris, France}

\date{\today}

\begin{document}
\maketitle

\begingroup
\renewcommand{\thefootnote}{\fnsymbol{footnote}}
\footnotetext[2]{These authors contributed equally.}
\footnotetext[1]{Corresponding author: \texttt{samuel.crew@qperfect.io}}
\endgroup

\begin{abstract}
Quantum LDPC codes promise to reduce the overhead of fault-tolerant quantum computation, but their practical use requires efficient logical gates, measurements, and access to non-Clifford resources. Here, we develop a complete set of computational primitives for La-cross hypergraph product codes. For a subclass of these codes, we show that seed automorphisms together with fold-transversal gates generate the full logical Clifford group up to Pauli corrections. We further prove that every rank one logical operator admits $D$ pairwise disjoint physical representatives, enabling parallel gauging and a space--time tradeoff for logical measurements. Finally, we construct an addressable surface code interface for injecting externally prepared resource states into selected logical qubits. Together, these results establish La-cross codes as a simple, low-connectivity generalisation of surface code computation, retaining analogous computational primitives while providing a complete Clifford+$T$ instruction set across multiple encoded qubits.
\end{abstract}

\section{Introduction}

Progress in quantum low density parity check (qLDPC) codes has recently enjoyed remarkable acceleration \cite{breuckmann2021quantum,bravyi2024high}. Lifted product constructions established asymptotically good quantum codes with constant rate and linear distance \cite{panteleev2022asymptotically}, while subsequent work has developed increasingly practical finite size families and methods for computation with them \cite{bhardwaj2026high}. Code surgery provides a generally applicable route to logical measurements \cite{horsman2012surface,williamson2026low,cowtan2026parallel,di2026error}, and recent fast and lifted surgery constructions substantially reduce its overhead \cite{cohen2022low,baspin2025fast,berent2026lifted}. These developments will eventually lead to large scale fault tolerant architectures combining the spatial efficiency of qLDPC memories with efficient logical computation. Here we study a near-term regime with small finite size codes whose structure natively supports fast logical operations directly within the encoded block. This is particularly relevant to neutral-atom architectures, where qubit movement, long-range interactions and parallel operations provide flexible connectivity, but atom rearrangement and repeated syndrome extraction can introduce significant time overhead \cite{pecorari2025high,xu2024constant}.

The use of code automorphisms to implement fault-tolerant logical Clifford operations has been studied in \cite{grassl2013leveraging,sayginel2025fault}. The native automorphisms and fold-transversal gates that we study in the present work can avoid the distance dependent measurement time of conventional surgery and, for Clifford operations, the auxiliary logical blocks required by teleportation based gate gadgets \cite{xu2025fast}. Our aim is to identify compact codes with a fast and structurally simple native instruction set for implementation on present day neutral atom hardware. By `fast' we mean that the Clifford operations are realised natively through code automorphisms and fold-transversal gates, while parallel gauging reduces logical Pauli measurements to a constant number (independent of the code distance) of syndrome extraction rounds. 

La-cross codes \cite{pecorari2025high} are a particularly simple example of hyper graph product (HGP) qLDPC codes. As we elucidate further in this work, they may be viewed as a minimal generalisation of a surface code \cite{hong2024long}. A surface code patch is obtained from a hyper graph product of two repetition codes, while a La-cross code replaces each repetition code by a (suitably truncated) classical recurrence code. The stabilisers retain the surface code cross shape with two additional `legs' of length $k$, resulting in a grid of $k\times k$ logical qubits, instead of the single logical qubit of a surface code \cite{pecorari2025addressable}. The familiar row--column structure of logical operators is also retained. In this work, we show that this analogy extends from memory to computation. The HGP structure supplies natural Hadamard and phase folds, while the automorphism group is sufficiently large to enable Clifford computation. Efficient logical Clifford computation has also been obtained in specially constructed qLDPC code families using transversal operations \cite{malcolm2026computing}.

Previous work obtained addressable Clifford operations by coupling a La-cross block to an auxiliary code gadget \cite{pecorari2025addressable}. Here we identify finite La-cross codes whose intrinsic automorphisms and fold-transversal gates generate the full logical Clifford group, and construct an addressable surface code interface for injecting externally prepared $\ket T$ states. This therefore gives a complete Clifford+$T$ instruction set for these blocks. 

\paragraph{Contributions.}
Our main results are:
\begin{itemize}
    \item \textbf{Intrinsic Clifford operations.}
    We determine the logical action of classical seed automorphisms of La-cross codes and show that, for primitive complete period La-cross codes, these automorphisms together with Hadamard and $S$ folds generate the full logical Clifford group (\cref{thm:clifford}).

    \item \textbf{Parallel surgery.}
    We prove that every rank one logical Pauli in the primitive complete period family has $D$ pairwise disjoint equivalent representatives (\cref{prop:disjoint}). Gauging these representatives in parallel gives a space-time tradeoff for logical measurement. The construction gives a two round measurement protocol using $O(D^2\log D)$ ancillary qubits (\cref{sec:measure}).

    \item \textbf{Surface code interface.}
    We construct a homomorphic \textsf{CNOT} interface from a distance $D$ surface code patch to an addressed La-cross logical degree of freedom and prove its fault distance and implementation depth. We thereby demonstrate a method to transfer externally prepared surface code resource states into the La-cross block (\cref{prop:depth,prop:transfer-fd}).
\end{itemize}
Together, these primitives provide a complete Clifford+$T$ instruction set for La-cross codes, combining native Clifford operations, parallel logical measurements and addressable resource state injection. They show that compact La-cross code memories can support fast, universal logical computation.

\paragraph{Outline.}
The paper is organised as follows. In \Cref{sec:codes} we review La-cross codes, their logical operators and discuss the primitive complete period family. In \Cref{sec:clifford} we discuss La-cross automorphisms and fold-transversal Clifford gates. In \Cref{sec:measure} we discuss a parallel surgery method for rank one logical measurements, and finally in \cref{sec:interface} we construct the addressable surface code interface.

\section{La-cross codes and logical representatives}\label{sec:codes}

In this section we recall the structure of the open boundary La-cross codes used throughout the paper. We first describe the general hypergraph product construction and its logical operators, and then specialise to a family of La-cross codes where the logical representatives have additional structure that will be useful later for computation.

La-cross codes \cite{pecorari2025high} are hypergraph product (HGP) codes \cite{tillich2013quantum} obtained from a classical seed code with itself. Let $n>k\ge2$, and write $m=n-k$. The classical seed code is defined by the parity check matrix\footnote{Here $H_{a,*}$ denotes the $a$th row of $H$, and $e_j\in\F_2^n$ denotes the standard basis vector with a single non-zero entry in position $j$.}
\begin{equation}\label{eq:seed}
    H=H_{k,n}\in\F_2^{m\times n}, \qquad H_{a,*}=e_a+e_{a+1}+e_{a+k}, \quad 0\le a<m .
\end{equation}
Thus each check acts on three coordinates with offsets $(0,1,k)$, with open boundary conditions so that no check wraps around the ends of the seed. The first $m$ columns of $H$ contain an upper-triangular $m\times m$ block with unit diagonal, so $H$ has full row rank $m$. Consequently $C_{k,n}=\ker H$ is an $[n,k]$ classical code. We write $d(C_{k,n})$ for the classical distance. A codeword $c\in C_{k,n}$ satisfies
\begin{equation}
    c_a+c_{a+1}+c_{a+k}=0, \qquad 0\le a<m.
\end{equation}
Equivalently, the first $k$ coordinates may be chosen freely, and the remaining $m$ coordinates are determined recursively by 
\begin{equation}\label{eq:seed-recurrence}
    c_{a+k}=c_a+c_{a+1}.
\end{equation}

We write the generator matrix in the form
\begin{equation}
G=\begin{pmatrix}I_k&A\end{pmatrix}\in\F_2^{k\times n},
\end{equation}
with each row completed from its first $k$ entries using the recurrence relation \eqref{eq:seed-recurrence}. This fixes a basis for the $k$-dimensional seed code and will also be useful below when describing seed automorphisms through their action on the columns of $G$.

We introduce the following notation: for $u\in\F_2^k$, $c_u=G^Tu\in C_{k,n}$ is the seed codeword whose first $k$ entries are $u$, and we write the columns as $g_j=Ge_j\in\F_2^k$ so that $(c_u)_j=g_j\cdot u=g_j^Tu$. Thus $g_j$ records how the $j$th coordinate of a codeword depends linearly on the initial condition $u$. The recurrence of equation \eqref{eq:seed-recurrence} is then equivalently
\begin{equation}
    g_{a+k}=g_a+g_{a+1}.
    \label{eq:column-recurrence}
\end{equation}
It is convenient to encode this recurrence by the polynomial
\begin{equation}
    h_k(x)=1+x+x^k .
    \label{eq:row-polynomial}
\end{equation}
Identifying $\F_2^k$ with $\F_2[x]/(h_k)$ in the basis $1,x,\ldots,x^{k-1}$, we have
\begin{equation}
x^k=1+x \pmod{h_k},
\end{equation}
and hence
\begin{equation}
x^{a+k}=x^a+x^{a+1}\pmod{h_k}.
\end{equation}
This is precisely the column recurrence \eqref{eq:column-recurrence}, with the initial columns $g_0,\ldots,g_{k-1}$ identified with $1,x,\ldots,x^{k-1}$. Under this identification,
\begin{equation}\label{eq:polmaps}
    g_j=x^j\bmod h_k .
\end{equation}
Following \cite{pecorari2025high,pecorari2025addressable}, we use $h_k$ as the row, or recurrence, polynomial.\footnote{With the standard cyclic code convention, the corresponding check polynomial (before the open-boundary truncation) is the reciprocal $h_k^*(x)=x^k h_k(x^{-1})=1+x^{k-1}+x^k$.}

The La-cross quantum code is then defined as the HGP of $H$ with itself and consequently has parity check matrices
\begin{equation}
    H_X=[H\otimes I_n\mid I_m\otimes H^T], \qquad H_Z=[I_n\otimes H\mid H^T\otimes I_m],
    \label{eq:checks}
\end{equation}
and corresponding parameters
\begin{equation}
    [[n^2+m^2,k^2,d(C_{k,n})]].
\end{equation}
Its physical qubits form two square arrays. An $n\times n$ \emph{principal sector}, indexed by pairs of seed coordinates, and an $m\times m$ \emph{check sector}, indexed by pairs of seed checks. The stabilisers have weight at most six. In the interleaved layout proposed in \cite{pecorari2025high}, a bulk weight six check can be viewed as a surface-code cross involving four nearby qubits, together with two longer range legs whose extent is set by $k$ and is independent of $n$.

We refer to the qubits with fixed first coordinate in the principal sector a \emph{principal row}, and those with fixed second coordinate a \emph{principal column}. As in the surface code, the $k^2$ logical qubits admit canonical $X$ representatives along principal rows and canonical $Z$ representatives along principal columns; see \cref{fig:logical-representatives}.  

It is convenient to arrange the $k^2$ logical qubits as a $k\times k$ grid and label logical Paulis by matrices $M,N\in\F_2^{k\times k}$.\footnote{More precisely, the principal sector $X$ logicals are $(\F_2^n/\operatorname{im}H^T)\otimes\ker H$, with the two factors reversed for $Z$ logicals. $H$ has full row rank so both factors may be identified with $\F_2^k$. We use the identification $q\otimes u\leftrightarrow qu^T$.} The matrix $E_{ij}=e_ie_j^T$ labels logical qubit $(i,j)$ with $0\leq i,j<k$, so that $\Xl_{ij}=\Xl(E_{ij})$ and $\Zl_{ij}=\Zl(E_{ij})$, with the commutation relation
\begin{equation}
\Xl(M)\Zl(N) = (-1)^{\operatorname{Tr}(MN^T)} \Zl(N)\Xl(M).
\end{equation}

We represent a physical $X$-type Pauli operator by a pair of binary matrices $(A,B)$, with $A\in\F_2^{n\times n}$ and $B\in\F_2^{m\times m}$, specifying its support on the $n\times n$ principal sector and the $m\times m$ check sector, respectively, and write $X(A,B)$ for the corresponding Pauli operator.\footnote{Explicitly, $X(A,B)=\prod_{i,j}X_{(i,j)}^{A_{ij}} \prod_{a,b}X_{[a,b]}^{B_{ab}}$, where $(i,j)$ labels principal sector qubits and $[a,b]$ labels check sector qubits.}
Similarly, $Z(A,B)$ denotes the analogous $Z$-type operator.

Writing $\epsilon_i\in\F_2^n$ for the $i$th physical coordinate vector, canonical representatives may be expressed as
\begin{equation}
\Xl_{ij}\sim X(\epsilon_i c_{e_j}^T,0), \qquad \Zl_{ij}\sim Z(c_{e_i}\epsilon_j^T,0).
\label{eq:canonical-reps}
\end{equation}
Thus $\Xl_{ij}$ lies on principal row $i$ with pattern $c_{e_j}$, while $\Zl_{ij}$ lies on principal column $j$ with pattern $c_{e_i}$. 

Since the first $k$ entries of $c_{e_j}$ are $e_j$, these representatives satisfy 
\begin{equation}
\left\langle \epsilon_i c_{e_j}^T,\, c_{e_p}\epsilon_q^T \right\rangle = (\epsilon_i^T c_{e_p}) (c_{e_j}^T\epsilon_q) = \delta_{ip}\delta_{jq},
\end{equation}
where $\langle A,B\rangle=\operatorname{Tr}(AB^T)$ is the binary inner product on support matrices. Hence
\begin{equation}
X(\epsilon_i c_{e_j}^T,0) Z(c_{e_p}\epsilon_q^T,0) = (-1)^{\delta_{ip}\delta_{jq}} Z(c_{e_p}\epsilon_q^T,0) X(\epsilon_i c_{e_j}^T,0).
\end{equation}
These operators therefore realise the canonical Pauli algebra of the $k^2$ logical qubits. The top-left $k\times k$ part of the principal sector provides a natural coordinate grid for the $k^2$ logical qubits, as illustrated in \cref{fig:logical-representatives}.

Every rank-one matrix $M\in\F_2^{k\times k}$ may be written as $M=qu^T$ for nonzero $q,u\in\F_2^k$. Since $G$ has full row rank, for any $q$ we may choose $v\in\F_2^n$ satisfying $Gv=q$. Then
\begin{equation}
X(v c_u^T,0),\qquad c_u=G^Tu,
\label{eq:equivalent-log-pauli}
\end{equation}
represents the logical operator $\Xl(qu^T)$.

\begin{figure}[htbp]
\centering
\begin{tikzpicture}[
    x=0.48cm, y=-0.48cm,
    qubit/.style={
        circle, draw=gray!55, fill=white,
        inner sep=0pt, minimum size=3.4pt
    }
]
\definecolor{logicalBlue}{HTML}{2463A0}
\definecolor{logicalOrange}{HTML}{C66A23}
\definecolor{logicalGreen}{HTML}{158477}

% X representatives.
\begin{scope}
    \node at (3,-1) {$\overline X$ representatives};
    \foreach \i in {0,...,6}{
        \foreach \j in {0,...,6}{
            \node[qubit] at (\j,\i) {};
        }
    }
    \foreach \j in {0,3,5,6}
        \fill[logicalBlue] (\j,0) circle (3pt);
    \foreach \j in {1,3,4,5}
        \fill[logicalOrange] (\j,1) circle (3pt);
    \foreach \j in {2,4,5,6}
        \fill[logicalGreen] (\j,2) circle (3pt);
\end{scope}

% Z representatives.
\begin{scope}[xshift=5.5cm]
    \node at (3,-1) {$\overline Z$ representatives};
    \foreach \i in {0,...,6}{
        \foreach \j in {0,...,6}{
            \node[qubit] at (\j,\i) {};
        }
    }
    \foreach \i in {0,3,5,6}
        \fill[logicalBlue] (0,\i) circle (3pt);
    \foreach \i in {1,3,4,5}
        \fill[logicalOrange] (1,\i) circle (3pt);
    \foreach \i in {2,4,5,6}
        \fill[logicalGreen] (2,\i) circle (3pt);
\end{scope}

% Shared logical-qubit labels.
\node[text=logicalBlue]   at (3,7.2)    {$(0,0)$};
\node[text=logicalOrange] at (8.73,7.2) {$(1,1)$};
\node[text=logicalGreen]  at (14.46,7.2){$(2,2)$};
\end{tikzpicture}

\caption{Canonical logical representatives in the $[[65,9,4]]$
La-cross code. Colours distinguish three logical qubits:
$(0,0)$, $(1,1)$ and $(2,2)$. Their $X$ representatives occupy principal rows (left), while their $Z$ representatives occupy principal columns (right). Only the $7\times7$ principal sector is shown.}
\label{fig:logical-representatives}
\end{figure}

The principal sector also controls the distance relevant to the fault tolerance arguments below.

\begin{lemma}[Principal support]\label{lem:principal}
Let $H$ have full row rank and let $C=\ker H$ have distance $d(C)$. For the square hypergraph product $\operatorname{HGP}(H,H)$, every nontrivial logical $X$ occupies at least $d(C)$ columns of the principal sector, while every nontrivial logical $Z$ occupies at least $d(C)$ principal rows. In particular, every nontrivial logical Pauli has principal sector weight at least $d(C)$.
\end{lemma}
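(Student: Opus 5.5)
Set up the HGP chain complex explicitly. An $X$-type operator $X(A,B)$ commutes with all $Z$ checks iff $H_Z(A,B)^T=0$. Using $H_Z=[I_n\otimes H\mid H^T\otimes I_m]$ and the matrix picture, this condition reads
\begin{equation}
A H^T + H^T B = 0 \qquad (\text{as } n\times m \text{ matrices}),
\end{equation}
up to transposition conventions, which I will fix once at the start. $X$ stabilisers are generated by the rows of $H_X$; in matrix form the stabiliser group is $\{(H^T S,\, S H)\}$, up to the same transpose conventions, with $S\in\F_2^{n\times m}$. My plan is to show that if $A$ has nonzero entries in fewer than $d(C)$ columns, then $(A,B)$ is a stabiliser. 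The $Z$ statement then follows by the symmetry of the construction under transposing both sectors, which swaps the roles of $H_X$ and $H_Z$ and of rows and columns.

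\textbf{Step 1: project to a classical codeword.} Multiply the cycle condition by a fixed vector to kill the check-sector term. Since $H$ has full row rank, $\F_2^n/\operatorname{im}H^T\cong\F_2^k$. For any $w\in\F_2^n$ with $w^T H^T=0$, i.e.\ $w\in\ker H=C$, left-multiplying by $w^T$ gives $w^TAH^T=0$. So every vector $w^TA$ lies in $C$. This is exactly the structure of the logical space $(\F_2^n/\operatorname{im}H^T)\otimes\ker H$ recorded in the footnote. Moreover, $(A,B)$ is logically trivial iff all these pairings vanish, i.e.\ iff $w^TA\cdot c=0$ for all $w,c\in C$. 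This is the standard Künneth-type statement, which I will either cite or verify directly.

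\textbf{Step 2: use the column bound.} Suppose $(A,B)$ is a nontrivial logical. Then some $w\in C$ has $c:=w^TA\neq 0$. The support of $c$ is contained in the set of nonzero columns of $A$. Since $c$ is a nonzero codeword, $|\supp c|\ge d(C)$, so $A$ has at least $d(C)$ nonzero columns. The weight bound follows immediately, because each occupied column contributes at least one qubit.

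\textbf{Main obstacle.} The key step is showing that nontriviality forces some $w^TA\neq0$ with $w\in C$. Equivalently, if $w^TA=0$ for all $w\in C$, then $(A,B)$ is a stabiliser plus something detected by the $Z$ logicals.

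My first approach is dimension counting. The map $(A,B)\mapsto (w_\ell^T A\, c_r)_{\ell,r}$, with $w_\ell, c_r$ bases of $C$, is exactly the pairing with the $k^2$ canonical $Z$ representatives $Z(c\,\epsilon^T,0)$, generalised to $Z(w c^T,0)$. These are valid $Z$ logicals: they commute with the $X$ checks because $Hw=0$, and check-sector commutation is trivial. The Pauli algebra computed earlier shows they span the full $k^2$-dimensional logical $Z$ space.

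So suppose $(A,B)$ commutes with all of them. Then it commutes with every logical $Z$, hence it is a stabiliser. Contrapositively, a nontrivial $(A,B)$ has $w^T A c\neq 0$ for some $w,c\in C$, and in particular $w^TA\neq0$. This route avoids needing a direct homological argument beyond the already-established canonical Pauli algebra.
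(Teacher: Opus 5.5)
\textbf{Comparison with the paper.} Your route differs from the paper's, which simply invokes the row--column weight bound of Quintavalle--Campbell. You reprove that bound directly, and the projection step is right. Multiplying $AH^T+H^TB=0$ on the left by $w^T$ with $w\in C$ kills the check-sector term, so $A^Tw\in C$, and its support lies inside the occupied principal columns. The statement you isolate under ``main obstacle'' (nontrivial implies some $w\in C$ has $w^TA\neq0$) is also the right target.

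\textbf{The gap.} The tool you use to reach that target fails. The criterion ``$(A,B)$ is trivial iff $w^TAc=0$ for all $w,c\in C$'' is false. Correspondingly, the family $Z(wc^T,0)$ with $w,c$ running over bases of $C$ does not span the logical $Z$ space. Nor is it the canonical family: there the second factor is a unit vector $\epsilon_j$, not a codeword. Restricting $c$ to $C$ only detects $A^Tw$ modulo $C^\perp=\im H^T$.

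In the paper's own family this loses everything. For $k\ge3$, two distinct nonzero codewords of a primitive complete-period seed overlap in $s2^{k-2}$ positions, and each has weight $s2^{k-1}$. Hence $C\subseteq C^\perp$; for $[[65,9,4]]$ the seed is the self-orthogonal $[7,3,4]$ simplex code. Every canonical $\Xl_{ij}$, with $A=\epsilon_ic_{e_j}^T$, then has $w^TAc=w_i\,(c_{e_j}\cdot c)=0$ for all $w,c\in C$. Indeed, writing $c=H^Ty$, the operator $Z(wc^T,0)$ is just the product of $Z$ checks with coefficient matrix $R=wy^T$, i.e.\ a stabiliser. So your pairing map $(A,B)\mapsto(w_\ell^TAc_r)$ vanishes identically on logicals and cannot yield the contrapositive you need.

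\textbf{The repair.} It is a change of quantifier. Since $w^TAH^Ty=w^TH^TBy=0$, the pairing $w^TAc$ depends only on $c$ modulo $\im H^T$. So $c$ must range over all of $\F_2^n$, equivalently over $\epsilon_0,\dots,\epsilon_{k-1}$. The correct criterion is: $(A,B)$ is trivial iff $w^TA=0$ for all $w\in C$.

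Concretely, use the canonical $Z(c_{e_i}\epsilon_j^T,0)$. Their pairing with the canonical $X$ representatives is $\delta_{ip}\delta_{jq}$, so they are independent modulo $Z$ stabilisers. Since the logical space has dimension exactly $k^2$, they span it. This dimension count is where full row rank of $H$ is genuinely used. Without it, $X(0,yz^T)$ with $0\neq y\in\ker H^T$ and $z\notin\im H$ is a nontrivial logical with empty principal support.

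A nontrivial $X(A,B)$ therefore anticommutes with some $Z(c_{e_i}\epsilon_j^T,0)$, i.e.\ $c_{e_i}^TA\epsilon_j=1$. So $w=c_{e_i}$ has $w^TA\neq0$, and your Step 2 and the transpose argument for $Z$ go through unchanged. For a general full-row-rank $H$, as in the lemma's statement, replace $\epsilon_0,\dots,\epsilon_{k-1}$ by vectors dual to a chosen basis of $C$. With this fix you have a self-contained proof of the cited bound that makes explicit where each hypothesis enters.
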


\begin{proof}
This follows directly from the logical row--column weight bound of
\cite{quintavalle2022reshape}
\end{proof}

\subsection{Primitive complete periods family}
A particularly interesting family arises when $h_k(x)=1+x+x^k$ is a primitive polynomial of degree $k$ over $\mathbb{F}_2$.\footnote{A degree-$k$ polynomial $h_k(x)\in\F_2[x]$ is primitive if it is irreducible and if the residue class of $x$ generates the multiplicative group of the field $\F_2[x]/(h_k)\cong \F_{2^k}$. Since $\F_{2^k}^\times$ has order $2^k-1$, the powers $1,x,x^2,\ldots,x^{2^k-2}$ modulo $h_k$ run through all nonzero elements of $\F_{2^k}$ exactly once before repeating.
}
We define
\begin{equation}
 L=2^k-1,\qquad n=sL,\qquad D=s2^{k-1}.
 \label{eq:primitive}
\end{equation}
Since $g_j=x^j\bmod h_k$, the column sequence of $G$ is $L$-periodic and, within each period, runs once through every nonzero vector of $\F_2^k$.

Hence, for any nonzero $u\in\F_2^k$, the codeword components $(c_u)_j=g_j^Tu$ are nonzero on exactly $2^{k-1}$ columns per period. Indeed, $v\mapsto u^Tv$ is a nonzero linear functional on $\F_2^k$, so its kernel is a $(k-1)$-dimensional subspace containing $2^{k-1}$ vectors. Therefore it takes the value $1$ on the remaining $2^{k-1}$ vectors of $\F_2^k$. Since one primitive period of the sequence $g_j$ contains every nonzero vector of $\F_2^k$ exactly once, and $0$ lies in the kernel, exactly $2^{k-1}$ columns in each period satisfy $g_j^Tu=1$. Therefore, every nonzero seed codeword has weight $D=s2^{k-1}$ and the seed distance is $d(C_{k,n})=D$.
 
Since $H$ has full row rank, the corresponding square HGP has parameters
\begin{equation}
[[N,K,D]] = [[n^2+(n-k)^2,k^2,s2^{k-1}]].
\end{equation}
The first two examples are $[[65,9,4]]$ and $[[346,16,8]]$ and we discuss the first of these as a representative example in detail in the following.

\begin{proposition}[Disjoint representatives]\label{prop:disjoint}
Every rank one logical $X$ or $Z$ in the primitive complete period family has $D$ pairwise disjoint equivalent representatives, each of weight either $D$ or $2D$.
\end{proposition}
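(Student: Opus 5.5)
The plan is to build all $D$ representatives from the single family \eqref{eq:equivalent-log-pauli}. Keep the column pattern $c_u$ fixed and vary only the row vector $v$. I treat the $X$ case; the $Z$ case follows by exchanging the roles of rows and columns, using $\Zl(N)$ representatives $Z(c_q w^T,0)$ with $Gw=u$.

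Write the rank-one logical as $M=qu^T$ with $q,u\in\F_2^k$ nonzero. By \eqref{eq:equivalent-log-pauli}, for \emph{any} $v\in\F_2^n$ with $Gv=q$, the operator $X(vc_u^T,0)$ represents $\Xl(qu^T)$. Its support is $\supp(v)\times\supp(c_u)$ inside the principal sector. In the primitive complete period family every nonzero seed codeword has weight $D$, so $\wt(c_u)=D$ and the representative has weight $D\cdot\wt(v)$. Moreover, two such representatives $X(vc_u^T,0)$ and $X(v'c_u^T,0)$ are disjoint whenever $\supp(v)\cap\supp(v')=\emptyset$, since their supports then lie in disjoint sets of principal rows. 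So it suffices to find $D$ vectors $v$ with pairwise disjoint supports, each of weight $1$ or $2$, and each satisfying $Gv=\sum_{j\in\supp v}g_j=q$.

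To find them, I use the structure of the columns $g_j=x^j\bmod h_k$. Split the coordinates $\{0,\dots,n-1\}$ into the $s$ periods $P_t=\{tL,\dots,tL+L-1\}$. Within each period the columns $g_j$ run bijectively over $\F_2^k\setminus\{0\}$, and I partition $P_t$ as follows.
\begin{itemize}
\item \emph{Singleton.} There is exactly one index $j\in P_t$ with $g_j=q$. It gives $v=\epsilon_j$, of weight one.
\item \emph{Pairs.} The map $w\mapsto w+q$ is a fixed-point-free involution on $\F_2^k\setminus\{0,q\}$. Each orbit $\{w,w+q\}$ corresponds to a unique pair of indices $a,b\in P_t$ with $g_a+g_b=q$. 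There are $(2^k-2)/2=2^{k-1}-1$ such pairs, each giving $v=\epsilon_a+\epsilon_b$, of weight two.
\end{itemize}
Hence each period is partitioned into $1+(2^{k-1}-1)=2^{k-1}$ blocks, each of which sums to $q$. Taking these blocks over all $s$ periods yields $s2^{k-1}=D$ vectors $v$ with pairwise disjoint supports. The corresponding representatives are therefore pairwise disjoint, equivalent to $\Xl(qu^T)$, and of weight $D$ (singletons) or $2D$ (pairs), as claimed.

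The only step needing care is the pairing argument: $q\neq0$ guarantees that the involution $w\mapsto w+q$ has no fixed points and never produces the zero vector. I expect no further obstacle, since disjointness and equivalence follow directly from the product form of \eqref{eq:equivalent-log-pauli}.
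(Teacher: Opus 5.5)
Your proof is correct and follows the paper's own argument (Appendix~\ref{app:disjoint-proof}) essentially step for step: in each of the $s$ periods you split the nonzero columns into the singleton $\{q\}$ and the $2^{k-1}-1$ translation pairs $\{x,x+q\}$, which gives $D=s2^{k-1}$ disjoint vectors $v_a$ with $Gv_a=q$ and representatives $X(v_ac_u^T,0)$ of weight $D$ or $2D$. The one small difference is the equivalence step: the paper checks it directly by writing $v_a+v_b=H^Ty$ (using $\ker G=\operatorname{im}H^T$) and noting that, because $Hc_u=0$, $\bigl((v_a+v_b)c_u^T,0\bigr)$ is an $X$ stabiliser, whereas you take it from \eqref{eq:equivalent-log-pauli}.
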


\begin{proof}
See appendix \ref{app:disjoint-proof}.
\end{proof}

\begin{example}
The primitive complete period condition is sufficient but not necessary for this property of $D$ disjoint representatives. With direct enumeration we find further La-cross examples outside the primitive case with this property. For example, the reducible $k=5$ seeds at $n=15$ and $n=18$ give La-cross codes
\begin{equation}
[[325,25,5]] \qquad\text{and}\qquad [[493,25,7]],
\end{equation}
respectively, and in both cases we find that every rank one logical $X$ or $Z$ has $D$ pairwise disjoint representatives.\footnote{Numbering the seed coordinates $0,\ldots,n-1$, for $q=e_0$ one may choose the disjoint subsets
\begin{equation}
\{0\},\ \{1,5\},\ \{2,10\},\ \{11,13\},\ \{6,9,12\}
\end{equation}
for $n=15$, and
\begin{equation}
\{0\},\ \{1,5\},\ \{2,10\},\ \{11,13\},\ \{16,17\},
\ \{6,9,12\},\ \{8,14,15\}
\end{equation}
for $n=18$. In each subset the corresponding generator columns sum to $e_0$. Hence their indicator vectors $v$ satisfy $Gv=e_0$ and give disjoint equivalent representatives $X(v c_u^T,0)$ of $\overline X(e_0u^T)$ for any nonzero $u$. The other $q$ and the corresponding $Z$ representatives are obtained similarly.}
Thus the property extends also beyond primitive complete periods.
\end{example}

We note, however, that the $D$ disjoint representative property does not hold for every La-cross code. For example, the $(k,n)=(5,20)$ code has distance nine, but the rank one logical $\overline X_{0,4}$ has exactly eight pairwise disjoint equivalent representatives.\footnote{Here $c_{e_4}$ has support ${4,8,9,12,14,16,17,18,19}$. Eight disjoint representatives are obtained from row supports
\begin{equation}
{0},\ {1,5},\ {2,10},\ {11,13},\ {16,17}, {8,19},\ {6,9,12},\ {4,7,18},
\end{equation}
each of whose generator columns sum to $e_0$. To rule out a ninth, we exhaustively enumerate the $2^{15}$ vectors in
$e_0+\operatorname{im}H^T$.}

\begin{remark}[Encoding rate]
La-cross codes are not constant rate since at fixed $k$ they encode $K=k^2$ logical qubits with $N\approx2n^2$ physical qubits. In the primitive examples, $n=s(2^k-1)$, so $K=O((\log N)^2)$ and $D=\Theta(\sqrt N)$. Thus their appeal lies in their finite size performance rather than their asymptotic rate. At small sizes, La-cross codes provide compact memories with good performance and low connectivity requirements \cite{pecorari2025high}. In the remainder of this work we show that they also possess useful computational structure: fast logical measurements from disjoint representatives, a full logical Clifford group generated by seed automorphisms and fold transversal gates, and an addressable surface code interface for injecting magic states.
\end{remark}

\section{Automorphisms and the Clifford group}\label{sec:clifford}

In this section we use symmetries of the classical La-cross seed to obtain logical gates on the HGP code. An automorphism of the classical code lifts to a physical transformation of the product code and induces a linear action on its logical qubits \cite{berthusen2025automorphism}. We determine this logical action for general open La-cross seeds and show that, for primitive complete-period seeds, it gives the full group $\GL(k,2)$ independently on each of the the two logical tensor factors.

These automorphisms are particularly appealing fault-tolerant operations since on the principal sector (where logical operators are supported) they act only by qubit permutations, which may be implemented by relabelling or atom movement without spreading errors \cite{grassl2013leveraging,sayginel2025fault,berthusen2025automorphism}. The corresponding action on the check sector may additionally require the application of a \textsf{CNOT} circuit. Together with the fold gates introduced below, we show that for the primitive complete-period family automorphisms generate the full logical Clifford group (\cref{thm:clifford}).

\subsection{Classification for an arbitrary open seed}

Consider an arbitrary open La-cross seed
$C_{k,n}=\ker H$, with $n>k\ge2$ and denote by $g_j=Ge_j\in\F_2^k$ the $j$th column of the seed generator matrix. Define its
multiplicity function by
\begin{equation}
    \mu_n(v)=\#\{j:g_j=v\}, \qquad v\in\F_2^k.
    \label{eq:column-multiplicity}
\end{equation}
A coordinate permutation $P\in\Sym_n$ is an automorphism of the seed code if it preserves $C_{k,n}$. Since the rows of $G$ form a basis of $C_{k,n}$, this is equivalent to the existence of a unique $A\in\GL(k,2)$ such that
\begin{equation}
    AG=GP.
\end{equation}
The matrix $A$ is the induced action of $P$ on the $k$-dimensional seed-code coordinates.  We denote the group of such induced transformation
by
\begin{equation}
    \Gamma_{k,n} = \left\{ A\in\GL(k,2): AG=GP \text{ for some }P\in\Sym_n  \right\}.
    \label{eq:gamma}
\end{equation}
Equivalently, since a coordinate permutation only reorders the columns
of $G$,
\begin{equation}
    \Gamma_{k,n}  = \left\{A\in\GL(k,2):\mu_n(Av)=\mu_n(v)\text{ for all}v\in\F_2^k \right\}.
    \label{eq:multiplicity}
\end{equation}
Thus the logical automorphism group is determined entirely by the multiset of columns of $G$. For La-cross codes in particular, the recurrence relation \eqref{eq:column-recurrence} makes this multiset explicit, allowing us to determine $\Gamma_{k,n}$.

\subsection{Periods and truncation}
Recall from \cref{eq:polmaps} that, under the identification $\F_2^k \simeq\F_2[x]/(h_k)$, the columns of $G$ are given by
\begin{equation}
    g_j=x^j\bmod h_k .
\end{equation}
Since the constant term of $h_k$ is nonzero, $x$ is invertible in this quotient. Let $\ell=\text{ord}(x)$ be its multiplicative order and write
\begin{equation}
    \mathcal O=\{g_0,\ldots,g_{\ell-1}\}
\end{equation}
for the resulting orbit of $g_0$ under repeated multiplication by $x$. For $n=s\ell+r$, with $0\le r<\ell$, the column multiset consists of $s$ complete copies of $\mathcal O$ followed by the prefix
\begin{equation}
    \mathcal P_r=\{g_0,\ldots,g_{r-1}\}.
\end{equation}
Since vector in $\mathcal{P}_r$ occur with multiplicity $s+1$, vectors in $\mathcal{O}\setminus \mathcal{P}_r$ with multiplicity $s$,  and vectors outside $\mathcal{O}$ with multiplicity zero, equation \eqref{eq:multiplicity} gives,
\begin{equation}
    \Gamma_{k,n} = \text{Stab}_{\GL(k,2)}(\mathcal O) \cap \text{Stab}_{\GL(k,2)}(\mathcal P_r),
    \label{eq:period-group}
\end{equation}
for $s\ge1$, with the empty-prefix stabilizer understood as the full group.

In particular, if $h_k$ is primitive then $\mathcal O=\F_2^k\setminus\{0\}$ and for a complete period seed we have
\begin{equation}
    \Gamma_{k,s(2^k-1)}=\GL(k,2).
    \label{eq:primitive-gamma}
\end{equation}
Thus it is possible to realise every invertible linear transformation of the seed logical space as a coordinate automorphism.

\begin{example}[Truncated $k=3$ seeds]
This example keeps the recurrence polynomial primitive, but relaxes the complete period assumption since the final period to be truncated. For $k=3$, the recurrence has period seven, and the seven columns in a complete period are the nonzero vectors of $\F_2^3$. Thus $\GL(3,2)$ acts on them as the automorphism group of the Fano plane. Writing $r=n\bmod 7$, the group $\Gamma_{3,n}$ is the stabiliser of the $r$ columns occurring one additional time.

For $r=0$ this gives the full group $\GL(3,2)$. For $r=1$ the stabiliser of a point is $S_4$, while for $r=2$ the stabiliser of an unordered pair is $D_8$. The first three recurrence vectors are linearly independent, so for $r=3$ the distinguished subset is a triangle in the Fano plane; its stabilizer is $S_3$. The cases
$r=4,5,6$ follow by taking complements. To summarise:
\begin{equation}
\begin{array}{c|rrrrrrr}
r&0&1&2&3&4&5&6\\ \hline
\Gamma_{3,n}
&\GL(3,2)&S_4&D_8&S_3&S_3&D_8&S_4\\
|\Gamma_{3,n}|
&168&24&8&6&6&8&24
\end{array}
\end{equation}
Hence, non-trivial logical automorphism groups also exist away from complete periods, although their size decreases sharply for $r\neq 0$. The groups are symmetric under $r\leftrightarrow 7-r$ since the complement of $P_{7-r}$ is a cyclic translation of $P_r$, so their setwise stabilizers are conjugate in $\operatorname{GL}(3,2))$ Consequently $\Gamma_{3,7s+r}\cong \Gamma_{3,7s+(7-r)}$ as we see exemplified in the above table.
\end{example}

\begin{example}[A reducible complete period seed]
In this example we instead keep the complete period assumption but relax primitivity of the recurrence polynomial. Consider
\begin{equation}
    h_5(x) =(x^2+x+1)(x^3+x^2+1),
\end{equation}
Under the Chinese remainder
decomposition
\begin{equation}
    \F_2[x]/(h_5)\cong U_2\oplus U_3,
\end{equation}
where \begin{equation}
U_2=\F_2[x]/(x^2+x+1)\cong\F_4,
\qquad
U_3=\F_2[x]/(x^3+x^2+1)\cong\F_8.
\end{equation}
The residue class of $x$ has order $3$ in $U_2$ and order $7$ in $U_3$, and hence has order $\text{lcm}(3,7)=21$ in the product. The recurence therefore has period $21$, and its complete orbit is
\begin{equation}
    \mathcal O = (U_2\setminus\{0\}) \times (U_3\setminus\{0\}),
\end{equation}
and hence for a complete-period seed, $\Gamma_{5,21s}=\operatorname{Stab}_{\GL(5,2)}(\mathcal O)$. Preserving $\mathcal O$ forces the two coordinate subspaces $U_2\oplus\{0\}$ and $\{0\}\oplus U_3$ to be preserved separately, and hence
\begin{equation}
    \Gamma_{5,21s} \cong \GL(2,2)\times\GL(3,2).
\end{equation}
This provides a simple example in which a substantial logical automorphism group survives even though $h_5$ is reducible.
\end{example}

\subsection{Lifting seed automorphisms}

We now lift the classical seed automorphisms classified above to logical gates on the La-cross code. Let $P$ be a coordinate automorphism of the classical seed. Since $P$ preserves $C_{k,n}=\ker H$, we have
\begin{equation}
\ker(HP)=\ker H,
\end{equation}
and hence $HP$ and $H$ have the same row space. Therefore there exists a unique $Q\in\GL(m,2)$ such that $HP=QH$.

When $Q$ is itself a permutation, $(P,Q)$ is a symmetry of the Tanner graph. In general $Q$ is an arbitrary invertible linear map, so the action on the check sector is implemented by a \textsf{CNOT} circuit rather than just a qubit relabelling. The Tanner graph symmetry group is in general then smaller than the logical automorphism group; examples are given in appendix \ref{app:aut-check}.

Applying two seed automorphisms $(P_1,Q_1)$ and $(P_2,Q_2)$ to the two factors of the HGP gives the physical binary $X$ map
\begin{equation}
    T = (P_1\otimes P_2) \oplus (Q_1^{-T}\otimes Q_2),
    \label{eq:automorphism}
\end{equation}
with the corresponding $Z$ map $T^{-T}$.  Using $HP_i=Q_iH$, one verifies directly that these maps preserve the $X$- and $Z$-check spaces of \cref{eq:checks}. If
\begin{equation}
    GP_i=A_iG,
\end{equation}
then its logical action is
\begin{equation}
    \Xl(M)\mapsto\Xl(A_1MA_2^{-1}), \qquad \Zl(N)\mapsto\Zl(A_1^{-T}NA_2^T).
    \label{eq:aut-logical}
\end{equation}
Thus the HGP inherits the logical action of $\Gamma_{k,n}\times\Gamma_{k,n}$.

For primitive complete period seeds, $\Gamma_{k,n}=\GL(k,2)$, so these gadgets supply all transformations
\begin{equation}
    M\longmapsto AMB^T, \qquad A,B\in\GL(k,2),
\end{equation}
on the logical matrix labels.

\subsection{Fold gates and Clifford generation}

The automorphisms above provide a large family of logical linear transformations, but not the full Clifford group. The remaining ingredient comes from the square HGP structure. Exchanging the two identical seed factors defines a natural transpose fold, which supports two fold transversal Clifford operations \cite{quintavalle2023partitioning}.

Let $\tau$ denote transposition on both physical square arrays and on logical matrix labels,
\begin{equation}
    \tau(M)=M^T.
\end{equation}
The Hadamard fold applies a Hadamard gate to every physical qubit and exchanges each off-diagonal mirror pair. It swaps $X$- and $Z$-checks and, writing a logical Pauli as
\begin{equation}
    \Xl(M)\Zl(N), \qquad M,N\in\F_2^{k\times k},
\end{equation}
induces the logical action
\begin{equation}
    F_\tau:(M,N)\longmapsto \bigl(\tau(N),\tau(M)\bigr) = (N^T,M^T).
    \label{eq:hadamard-fold}
\end{equation}
The phase fold applies $S$ on the principal diagonal, $S^\dagger$ on the check-sector diagonal, and $\mathsf{CZ}$ between each off-diagonal mirror pair. It maps
\begin{equation}
    S_X(a,j)\longmapsto S_X(a,j)S_Z(j,a),
\end{equation}
and induces the logical action
\begin{equation}
    P_\tau:(M,N)\longmapsto \bigl(M,N+\tau(M)\bigr) = (M,N+M^T).
    \label{eq:phase-fold}
\end{equation}
Thus, in the canonical logical basis, $P_\tau$ acts by $S$ on diagonal logical qubits and by $\textsf{CZ}$ between transposed off-diagonal pairs. For the symplectic description below, we identify the $k\times k$ logical matrix labels with vectors in $\F_2^{k^2}$ in the basis $\{E_{ij}\}$.

\begin{theorem}[Clifford generation]\label{thm:clifford}
For a primitive complete period La-cross code with $k\ge3$, the seed automorphism gadgets together with $F_\tau$ and $P_\tau$ generate the full logical Clifford group modulo Pauli corrections,
\begin{equation}
    \Sp(2k^2,2).
\end{equation}
\end{theorem}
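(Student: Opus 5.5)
We work throughout in the binary symplectic picture. Here a logical Pauli $\Xl(M)\Zl(N)$ is the vector $(M,N)\in\F_2^{k\times k}\oplus\F_2^{k\times k}\cong\F_2^{2k^2}$, the symplectic form is $\langle (M,N),(M',N')\rangle=\operatorname{Tr}(MN'^T+M'N^T)$, and the Clifford group modulo Paulis is $\Sp(2k^2,2)$. Let $\mathcal G$ be the subgroup generated by the automorphism gadgets \eqref{eq:aut-logical}, $F_\tau$ and $P_\tau$. The strategy is to exhibit inside $\mathcal G$ a standard generating set for $\Sp(2n,2)$ with $n=k^2$.

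We use the generating set consisting of the whole of $\GL(n,2)$, acting as $(x,z)\mapsto(Ax,A^{-T}z)$, the full Hadamard $(x,z)\mapsto(z,x)$, and a single transvection $(x,z)\mapsto(x,z+e_1e_1^Tx)$ (a phase gate on one qubit). It is classical that these generate $\Sp(2n,2)$: they contain all $\mathsf{CNOT}$s, single-qubit $H$ and $S$ up to conjugation, and hence everything. So the proof reduces to three steps, listed from easy to hard.

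\textbf{Step 1 (linear part).} By \eqref{eq:primitive-gamma} and \eqref{eq:aut-logical}, $\mathcal G$ contains every map $M\mapsto AMB^{-1}$, i.e.\ $A\otimes B^{-T}$ on $\F_2^{k^2}$, with the correct contragredient action on $N$. Conjugating by $F_\tau$ also gives the transpose $M\mapsto M^T$ (combined with $X\leftrightarrow Z$). To obtain all of $\GL(k^2,2)$, we note that $P_\tau$ conjugated by $F_\tau$ gives $(M,N)\mapsto(M+N^T,N)$. Combining this with $P_\tau$ in the usual $\mathsf{CNOT}$-from-$\mathsf{CZ}$/$H$ fashion produces nontrivial two-qubit linear maps that mix $E_{ij}$ with $E_{ji}$. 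For example, $F_\tau P_\tau F_\tau P_\tau\cdots$ yields, on the linear subgroup, the map $M\mapsto M+M^T$-type generators. Concretely, one composes $P_\tau$, $F_\tau$ and a transposition-free automorphism so as to obtain a single $\mathsf{CNOT}$-like element $M\mapsto M+E_{ab}ME_{cd}$ that is not of tensor form. The group $\GL(k,2)\otimes\GL(k,2)$ is a maximal-type subgroup (it is the stabiliser of the rank-one variety), so together with one element not preserving rank-one matrices it should generate $\GL(k^2,2)$. To make this rigorous, we would instead argue directly with elementary transvections $M\mapsto M+\operatorname{Tr}(\cdot)$: using tensor-form transvections $I+E_{ab}\otimes I$ one gets all rank-$k$ block moves, and the extra element provides a rank-one transvection $I+E_{(a,b),(c,d)}$. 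Conjugating that by $\GL(k,2)\times\GL(k,2)$, which acts transitively on pairs of rank-one matrices with trace-form conditions, gives all elementary transvections, hence $\GL(k^2,2)$. The hypothesis $k\ge3$ enters here, to have enough room (transitivity / avoiding the exceptional small cases of $\GL(2,2)\otimes\GL(2,2)$).

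\textbf{Step 2 (Hadamard and phase).} The full Hadamard $(M,N)\mapsto(N,M)$ is $F_\tau$ composed with the linear map $M\mapsto M^T$ from Step 1 (transposition is a permutation matrix in $\GL(k^2,2)$). Similarly, $P_\tau$ is $(M,N)\mapsto(M,N+SM)$ with $S$ the symmetric transposition matrix. Conjugating it by linear maps $A\in\GL(k^2,2)$ gives $N\mapsto N+A^{-T}SA^{-1}M$. Differences of two such maps yield further symmetric forms, and since $S$ has nonzero diagonal (the diagonal logical qubits get $S$), its $\GL$-orbit includes non-alternating forms of rank one after adding, say, $P_\tau$ to a conjugate differing in one diagonal entry. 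This produces the single-qubit phase transvection, completing the generating set.

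The main obstacle is Step 1: showing that the tensor-product automorphisms plus the folds generate all of $\GL(k^2,2)$, i.e.\ extracting a genuinely non-tensor linear element and running the transvection-generation argument. We would first try an explicit small word (for example a commutator of $P_\tau$ with its $F_\tau$-conjugate, twisted by a suitable $A\otimes B$) and check that it gives a rank-one transvection.
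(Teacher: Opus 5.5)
Your reduction to the generating set $\{\GL(k^2,2),\ \text{full Hadamard},\ \text{one phase transvection}\}$ is sound. Step 2 would also go through once Step 1 is available: the congruence orbit of $\tau$, a full-rank non-alternating symmetric matrix, is the orbit of the identity, and that orbit spans all of $\Sym(\F_2^{k^2})$.

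The gap is Step 1, which carries the whole content of the theorem and is never established. Write $L_S:(x,z)\mapsto(x,z+Sx)$ and $U_S:(x,z)\mapsto(x+Sz,z)$. Then $P_\tau=L_\tau$ and $F_\tau P_\tau F_\tau=U_\tau$. Since $\tau^2=I$, one checks $L_\tau^2=U_\tau^2=(U_\tau L_\tau)^3=1$ and $F_\tau=U_\tau L_\tau U_\tau$. So the folds alone generate a copy of $S_3$ whose only block-diagonal element is the identity. Consequences for your suggestions:
\begin{itemize}
\item Words such as $F_\tau P_\tau F_\tau P_\tau$, or the commutator of $P_\tau$ with its $F_\tau$-conjugate, contain no nontrivial linear map.
\item ``$M\mapsto M+M^T$'' is not even invertible; its kernel is the symmetric matrices.
\item The non-tensor element $M\mapsto M+E_{ab}ME_{cd}$ is asserted, but no word producing it is given.
\item The maximality argument is only a heuristic, and as stated it is inaccurate. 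The linear maps preserving rank-one matrices also include the transposition, so $\GL(k,2)\otimes\GL(k,2)$ is not itself maximal. Using maximality of that stabiliser would require classification input, and you would still need to exhibit an element outside it.
\end{itemize}
Correspondingly, your account of where $k\ge3$ enters is not backed by any argument.

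The missing idea is to skip the linear group entirely and apply your own Step 2 conjugation mechanism using only the tensor maps you actually have. Conjugating $L_\tau$ by $\operatorname{diag}(R,R^{-T})$ with $R=A\otimes B$ gives $L_{R^{-T}\tau R^{-1}}$. Here $R^{-T}\tau R^{-1}=(C\otimes C^T)\tau=:B_C$ with $C=A^{-T}B^{-1}$, which ranges over all of $\GL(k,2)$. Since $L_{S_1}L_{S_2}=L_{S_1+S_2}$, every lower shear is obtained as soon as the $B_C$ span $\Sym(\F_2^{k^2})$.

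The index reshuffle $(i,j),(p,q)\mapsto(i,q),(p,j)$ sends $B_C$ to $\mathrm{vec}(C)\mathrm{vec}(C)^T$. Spanning is therefore equivalent to the statement that no nonzero multilinear polynomial of degree at most two in the entries of $C$ vanishes on all of $\GL(k,2)$. This is a finite check for $k=3$, and the case $k>3$ reduces to it. It is exactly where $k\ge3$ is needed: for $k=2$ there are only six such matrices against $\dim\Sym(\F_2^4)=10$.

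Conjugating by $F_\tau$ then gives every upper shear $U_{\tau S\tau}$, and lower and upper shears together generate $\Sp(2k^2,2)$. The full $\GL(k^2,2)$ of your Step 1 comes out as a consequence rather than being needed as an input.
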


\begin{proof}
Write $D(R)=\text{diag}(R,R^{-T})$. Since
$\Gamma_{k,n}=\GL(k,2)$, the automorphism gadgets provide
$D(A\otimes B)$ for arbitrary $A,B\in\GL(k,2)$. Conjugating
$P_\tau$ by these transformations gives lower symplectic shears whose symmetric matrices include
\begin{equation}
    B_C=(C\otimes C^T)\tau, \qquad C\in\GL(k,2).
    \label{eq:phase-orbit}
\end{equation}
These span the full space of symmetric $k^2\times k^2$ matrices, as shown in detail in appendix \ref{app:phase-orbit}. Products therefore generate every lower symplectic shear. Conjugation by $F_\tau$ gives every upper shear, and together these generate the full symplectic group.
\end{proof}

\begin{remark}[Fault distance.]
Assume that qubit permutations are implemented by relabelling or atom movement rather than by \textsf{SWAP} circuits. Automorphisms and the Hadamard fold act on the principal sector only by permutations and single qubit gates, so by \cref{lem:principal} a nontrivial logical fault requires at least $D$ faults, even though errors may spread within the check sector \textsf{CNOT} circuits. The phase fold can spread a single fault across a mirror pair and therefore gives the general bound $d_{\rm fault}\ge\lceil D/2\rceil$; we therefore must separate successive $S$ folds by error correction. 
\end{remark}

\begin{remark}
Recent no-go results constrain the logical Clifford groups implementable directly by transversal, fold-transversal, and code automorphism gadgets \cite{chakraborty2026no,guyot2026addressability,koh2026achieving}. There is no contradiction with the theorem above our seed automorphisms may require general check-sector \textsf{CNOT} circuits, and arbitrary Cliffords are compiled from multiple such gadgets and fold layers with error correction between successive phase folds.
\end{remark}

\section{Logical measurements by parallel gauging}
\label{sec:measure}

Logical Pauli measurements are a fundamental computational primitive, underlying schemes such as Pauli based computation. We implement them using the  gauging construction of Williamson and Yoder \cite{williamson2026low}. The ingredient specific to La-cross codes is then the disjoint representative property that we prove holds (\cref{prop:disjoint}) for the primitive complete period family. These disjoint representatives allow several equivalent realisations of the same logical Pauli to be gauged simultaneously, leading to a natural space--time tradeoff between ancillary qubits and the number of syndrome-extraction rounds. Recent work has also developed fast and parallel logical measurements for qLDPC codes using, for example, product and lifted surgery constructions \cite{gu2026qgpu,berent2026lifted,zhou2026parallel,bhardwaj2026high,baspin2025fast}.

\begin{figure}[!htbp]
    \centering
    \begin{tikzpicture}[lc base,x=1cm,y=1cm]
\path[use as bounding box] (0,0) rectangle (12.8,4.2);
\node[lc title] at (0,3.95) {(a)};
\node[lc title] at (7.3,3.95) {(b)};
\begin{scope}[shift={(.6,3.5)},x=.43cm,y=-.43cm]
 \foreach \i in {0,...,6}{\foreach \j in {0,...,6}{\node[lc dot] at (\j,\i) {};}}
 \foreach \i/\clr in {0/lcOrange,1/lcBlue,2/lcTeal,3/lcOrange,4/lcTeal,5/lcPurple,6/lcPurple}{\foreach \j in {0,1,2,5}{\fill[\clr] (\j,\i) circle[radius=3pt];}}
\end{scope}
\foreach \y/\clr/\idx in {3.25/lcBlue/1,2.55/lcOrange/2,1.85/lcTeal/3,1.15/lcPurple/4}{
 \fill[\clr] (4.2,\y) circle[radius=3pt];\node[anchor=west,lc note] at (4.5,\y) {$P_{\idx}$};}
\node[lc note] at (2.5,.35) {four equivalent representatives};
\coordinate (v) at (8,3.3);\coordinate (w) at (10.6,3.3);
\coordinate (t) at (8,1.1);\coordinate (z) at (10.6,1.1);
\draw[lcGray,line width=1pt] (v)--(w)--(z)--(t)--cycle;
\draw[lcBlue,line width=2pt] (9.3,3.3)--(v)--(8,2.2);
\node[circle,fill=lcBlue,minimum size=7pt,inner sep=0pt] at (v) {};
\foreach \p in {w,t,z}{\node[circle,fill=lcInk,minimum size=6pt,inner sep=0pt] at (\p) {};}
\node[lc aux,fill=lcBlue!20,draw=lcBlue] at (9.3,3.3) {};
\node[lc aux,fill=lcBlue!20,draw=lcBlue] at (8,2.2) {};
\node[lc aux] at (10.6,2.2) {};\node[lc aux] at (9.3,1.1) {};
\node[lc note,text=lcBlue,anchor=east] at (7.8,3.3) {$A_v$};
\node[lc note] at (9.3,.35) {$\prod_v A_v=P$};
\end{tikzpicture}
    \caption{\textbf{Parallel gauging of a rank-one logical.} (a) Four disjoint supports represent the same rank-one logical operator in the $[[65,9,4]]$ code. (b) Schematic gauging of one representative: edge ancillas are introduced on an auxiliary graph and the product of the vertex checks measures the supported logical operator.  Explicit auxiliary graphs satisfying the fault-tolerance conditions are given in
    \cref{ex:measurement-65}.}
    \label{fig:measurement}
\end{figure}

\subsection{Gauging a logical representative}
Fix nonzero $q,u\in\F_2^k$ and consider the operator
\begin{equation}
    \overline P=\Xl(qu^T).
\end{equation}
For any $v$ satisfying $Gv=q$, $P=X(v c_u^T,0)$ is a physical representative of $\overline P$. Let $S$ denote its support. To measure $P$, choose a connected auxiliary graph whose vertices include the qubits in $S$, place an ancilla in $\ket0$ on each edge, and measure the Gauss law checks
\begin{equation}
    A_v = X_v\prod_{e\ni v}X_e, \qquad \prod_v A_v=P,
    \label{eq:gauss}
\end{equation}
where $X_v$ denotes the data qubit Pauli at vertices corresponding to
$\supp(P)$, and $X_v=I$ at any vertex not associated with a
data qubit. The product of the measurement outcomes therefore gives the eigenvalue of $\overline P$. Original $Z$ checks are dressed by edge operators pairing their intersections with $S$, and cycle checks are added to complete the stabiliser group of the deformed code \cite{williamson2026low}.

Fault tolerance depends on the auxiliary graph. The construction of \cite{williamson2026low} requires, in particular, bounded degree and congestion, a bounded weight cycle basis, and sufficient expansion, which may be expressed as a Cheeger bound $h(G)\ge1$. Under these conditions the deformed code remains LDPC and retains the distance of the original code. Suitable graphs exist for a weight-$w$ logical using $O(w\log^3w)$ ancillas \cite{williamson2026low}, improved to $O(w\log w)$ in \cite{yuan2026parsimonious}. Repeated syndrome extraction for $O(D)$ rounds then gives the standard fault tolerant gauging measurement. 

\subsection{Parallel gauging from disjoint representatives}
For a primitive complete period La-cross code, \cref{prop:disjoint} provides $D$ pairwise disjoint vectors $v_a$ satisfying $Gv_a=q$.
Hence
\begin{equation}
    P_a=X(v_a c_u^T,0), \qquad 1\le a\le D,
\end{equation}
are disjoint physical representatives of the same logical operator $\overline P=\Xl(qu^T)$, each of weight either $D$ or $2D$. Their auxiliary systems can therefore be gauged in parallel without sharing data qubits.

This redundancy suggests a direct space--time tradeoff for logical measurement. For an integer $t$, gauge $r=2t-1$ equivalent representatives in parallel for $T=\left\lceil D/t \right\rceil$ syndrome extraction rounds and combine their logical outcomes by majority vote. Heuristically, an incorrect majority outcome requires faults affecting at least $t$ of the $r$ representatives, while each representative is protected in the time direction by $T$ rounds, suggesting a protection of order $tT\gtrsim D$.

Since $D=s2^{k-1}$ is even, \cref{eq:primitive}, taking
\begin{equation}
    t=D/2,\qquad r=D-1,\qquad T=2
    \label{eq:tradeoff}
\end{equation}
gives a constant in time (two rounds) resource estimate using at most $O(D^2\log D)$ ancillary qubits.

\begin{example}[A rank one measurement]
\label{ex:measurement-65}
Consider the $[[65,9,4]]$ code illustrated in figure \ref{fig:measurement} with $q=e_1$ and $u=(1,1,1)^T$. The seed word has support
\begin{equation}
    J=\supp(c_u)=\{0,1,2,5\}.
\end{equation}
The four disjoint subsets
\begin{equation}
    \{1\},\qquad
    \{0,3\},\qquad
    \{2,4\},\qquad
    \{5,6\}
\end{equation}
have indicator vectors $v_a$ satisfying $Gv_a=e_1$.  Hence the corresponding supports $\supp(v_a)\times J$, of weights $4,8,8,8$, are disjoint representatives of
\begin{equation}
    \overline P = \Xl\!\bigl(e_1(1,1,1)\bigr) = \Xl_{1,0}\Xl_{1,1}\Xl_{1,2}.
\end{equation}

The seed checks intersect $J$ in
\begin{equation}
    \{0,1\},\qquad
    \{1,2\},\qquad
    \{2,5\},\qquad
    \varnothing .
\end{equation}
For the weight four representative, choose the auxiliary graph
\begin{equation}
    C_4=(0,1,2,5,0).
\end{equation}
Each nontrivial check intersection is an edge of this cycle. Since $C_4$ has degree two, a weight-four cycle basis and $h(C_4)=1$, the corresponding deformed code satisfies the conditions of \cite{williamson2026low} and retains distance four.

For each weight-eight representative, take two copies of this cycle, one on each occupied principal row, and join corresponding vertices:
\begin{equation}
    C_4\square K_2\cong Q_3.
\end{equation}
The cube has degree three, a bounded congestion weight four cycle basis, and $h(Q_3)=1$, so these deformed codes are also distance four.
\end{example}

\section{An addressable surface code interface}\label{sec:interface}

The Clifford operations and Pauli measurements of \Cref{sec:clifford,sec:measure} are not sufficient for universal quantum computation. To tackle this, we treat a surface code patch as an external resource state and construct an addressable homomorphic \textsf{CNOT} between the surface code and a chosen La-cross logical qubit. This allows, for example, a magic state $\ket{T}$ encoded in a surface code to be teleported into the La-cross block, providing the resource to implement universal computation.

\subsection{Construction and logical action}

Let \(H_\src\in\F_2^{(D-1)\times D}\) be the parity check matrix of the length $D$ repetition code\footnote{Explicitly,
$$
H_\src=
\begin{pmatrix}
1&1&0&\cdots&0\\
0&1&1&\ddots&\vdots\\
\vdots&\ddots&\ddots&\ddots&0\\
0&\cdots&0&1&1
\end{pmatrix},
\qquad
\ker H_\src=\Span\{(1,\ldots,1)\}.
$$}.
 The source surface code patch is then $\text{HGP}(H_S,H_S)$ with parameters $[[D^2+(D-1)^2,1,D]]$. Its logical $\overline X_S$ may be represented on any principal row. We map each such row onto a physical representative of the same chosen La-cross logical operator \(\Xl(qu^T)\) (see \Cref{sec:codes}). More precisely, consider maps
\begin{equation}
f_1,g_1:\F_2^D\to\F_2^n,
\qquad
f_0,g_0:\F_2^{D-1}\to\F_2^m
\end{equation}
satisfying the chain map conditions
\begin{equation}
Hf_1=f_0H_\src,
\qquad
g_1H_\src^T=H^Tg_0 .
\label{eq:chain}
\end{equation}
Here $f_1$ and $g_1$ determine the desired coupling in the principal sector, while $f_0$ and $g_0$ complete it on the check sector so that the stabilisers are appropriately preserved. Define the matrix $\Phi$ and the corresponding \textsf{CNOT} circuit:
\begin{equation}
\Phi=(g_1\otimes f_1)\oplus(g_0\otimes f_0), \qquad U_\Phi=\prod_{\Phi_{\ell \sigma}=1}\textsf{CNOT}_{\sigma\to\ell}\,.
\label{eq:interface}
\end{equation}
With $H_{X,\src/L}$ and $H_{Z,\src/L}$ as the source/La-cross HGP check matrices, \cref{eq:chain} gives
\begin{align}
\Phi H_{X,\src}^T&=H_{X,L}^T(g_0\otimes f_1),
\label{eq:stab-x}\\
\Phi^TH_{Z,L}^T&=H_{Z,\src}^T(g_1^T\otimes f_0^T),
\label{eq:stab-z}
\end{align}
\Cref{eq:stab-x} shows that $U_\Phi$ maps the source X checks to a product of source and La-cross X checks specified by $g_0\otimes f_1$, and equivalently for the Z checks of the La-cross with \Cref{eq:stab-z}. Thus $U_\Phi$ preserves the joint codespace, as desired.

The same conditions determine which La-cross logical is addressed. Since $H_S\mathbf 1=0$, we have $Hf_1\mathbf 1=0,$ and hence $f_1\mathbf 1=c_u$ for some $u\in\F_2^k$. Additionally, writing $v_a=g_1e_a$, the second condition gives
$G(v_a+v_{a+1})=0$, 
so $Gv_a=q$ for some $q\in\F_2^k$ independent of $a$. Thus each surface code row is mapped to a representative of the same logical operator $\Xl(qu^T)$ (\cref{eq:equivalent-log-pauli}).

\paragraph{Logical action.}
$U_\Phi$ is a \textsf{CNOT} circuit, so its logical action is symplectic and fixed by the $X$ images. $X$ spreads only from source to target, so $\Xl(M)\mapsto\Xl(M)$, and by the previous paragraph $\overline X_\src\mapsto\overline X_\src\Xl(qu^T)$. Preserving commutation relations then fixes the $Z$ images:
\begin{align}
\overline{X}_\src
&\longmapsto
\overline{X}_\src\,\Xl(qu^T),
&
\overline{Z}_\src
&\longmapsto
\overline{Z}_\src,
\label{eq:logical-action-S}\\
\Xl(M)
&\longmapsto
\Xl(M),
&
\Zl(N)
&\longmapsto
\overline{Z}_\src^{\,q^TNu}\,\Zl(N),
\label{eq:logical-action}
\end{align}
which is a logical controlled-$\Xl(qu^T)$ with the surface code as control. For $q=e_i$ and $u=e_j$, this is a single \textsf{CNOT} onto logical qubit $(i,j)$. In general,
\begin{equation}
\Xl(qu^T)=\prod_{q_i u_j=1}\Xl_{ij},
\end{equation}
so it is a product of commuting \textsf{CNOT}s in the canonical basis, or a single \textsf{CNOT} in an adapted logical basis.

\subsection{Explicit construction}

We now restrict to the primitive complete period family and realise prescribed nonzero $q,u$. The source distance is then fixed to $D=d(C_{k,n})=s2^{k-1}$ (\cref{eq:primitive}): one source column per element of $\supp(c_u)$ and one source row per disjoint $v_a$. Every nonzero $c_u$ has weight $D$, and \cref{prop:disjoint} and its proof (\cref{app:disjoint-proof}) provide $D$ pairwise disjoint vectors $v_0,\ldots,v_{D-1}$ satisfying $Gv_a=q$, with $\wt(v_a)\le2$ and supports partitioning the seed coordinates, each support lying inside a single period. We order them period by period. Writing
\begin{equation}
\supp(c_u)=\{p_0<\cdots<p_{D-1}\},
\qquad
c_u^{\le b}=\sum_{b'\le b}e_{p_{b'}},
\end{equation}
we set
\begin{align}
f_1e_b&=e_{p_b},
&
f_0e_b&=Hc_u^{\le b},
\notag\\
g_1e_a&=v_a,
&
H^Tg_0e_a&=v_a+v_{a+1}.
\label{eq:explicit-maps}
\end{align}
These satisfy \cref{eq:chain}: $g_0$ is well defined because $v_a+v_{a+1}\in\ker G=\im H^T$ and $H^T$ is injective, and $f_0H_\src=Hf_1$ holds by construction of $c_u^{\le b}$ and since $Hc_u=0$. Geometrically, in the principal sector, source qubit $(a,b)$ controls \textsf{CNOT}s onto $\supp(v_a)\times\{p_b\}$, so a complete source row is copied onto $v_ac_u^T$, a representative of $\Xl(qu^T)$. Neighbouring rows are therefore mapped to equivalent representatives; the check-sector maps implement exactly the stabilisers relating them. We now state two properties of this construction with \Cref{prop:depth} and \Cref{lem:propagation}.

\begin{proposition}[Bounded-depth interface]\label{prop:depth}
The interface of \cref{eq:explicit-maps} has \textsf{CNOT} depth $O_k(1)$ and uses $O_k(D^2)$ \textsf{CNOT}s, where $O_k$ hides constants depending only on $k$, not on $s$, $q$ or $u$.
\end{proposition}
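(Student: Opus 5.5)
Since $U_\Phi$ is a product of \textsf{CNOT}s indexed by the nonzero entries of $\Phi=(g_1\otimes f_1)\oplus(g_0\otimes f_0)$, the plan is to bound the row and column weights of $\Phi$ by constants depending only on $k$. Then the bipartite conflict graph between source and target qubits has maximum degree $\Delta=O_k(1)$. By König's edge-colouring theorem it decomposes into $\Delta$ matchings, and each matching is one layer of commuting \textsf{CNOT}s on disjoint qubits. This gives depth $\Delta=O_k(1)$. The gate count is at most the number of source qubits, $D^2+(D-1)^2$, times the maximum column weight, hence $O_k(D^2)$. The weight of a tensor product is the product of weights, so it suffices to bound the row and column weights of the four factors $f_1,g_1,f_0,g_0$ separately.

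The principal-sector factors are immediate. $f_1$ maps basis vectors to distinct basis vectors $e_{p_b}$, so each of its rows and columns has weight at most one. The columns of $g_1$ are the $v_a$, which have $\wt(v_a)\le2$ by \cref{prop:disjoint}. Since the $v_a$ have disjoint supports, each row of $g_1$ has weight at most one.

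For $f_0$, note that $c_u^{\le b}$ agrees with $c_u$ up to coordinate $p_b$ and vanishes after it. Hence $Hc_u^{\le b}$ can only be nonzero on checks $a$ whose window $\{a,a+1,a+k\}$ straddles $p_b$, that is, $p_b-k<a\le p_b$. So column $b$ of $f_0$ has weight at most $k$. Conversely, a fixed check $a$ is hit only by those $b$ with $p_b\in[a,a+k)$, of which there are at most $k$. This gives row weight at most $k$.

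For $g_0$, I need a localized solution of $H^Tw=v_a+v_{a+1}$. Each support of $v_a$ lies inside a single period, and the $v_a$ are ordered period by period, so $v_a+v_{a+1}$ is supported within a window of length at most $2L$. Solving $H^Tw=y$ back-substitutes from the upper-triangular unit-diagonal structure. The unknown $w_a$ is determined by $y$ at coordinates $\le a+k$ together with earlier values of $w$. Since $w$ is unique by injectivity and $y$ lies in $\im H^T$, I expect $w$ to be supported in a window of length $O(L+k)$ around the support of $y$. Establishing this is \textbf{the main obstacle}: uniqueness alone does not give locality, because the forward recurrence could a priori propagate to the right. My first attempt is to run the solve from both ends. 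Triangular elimination from the left shows $w$ vanishes before the window. For the other side, I would use that $\ker G=\im H^T$ and that the last $k$ columns of $H^T$ also give a triangular structure, which forces $w$ to vanish after the window. Once locality holds, column $a$ of $g_0$ has weight $O(L)=O(2^k)$. Each check index lies in $O_k(1)$ such windows, so the row weights of $g_0$ are also $O_k(1)$. Combining the four bounds then gives the proposition.
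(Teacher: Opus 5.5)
Your plan follows the paper's proof in structure. You use König's edge-colouring bound on the bipartite source--target \textsf{CNOT} graph, the multiplicativity of row and column weights under $\otimes$, and a factor-by-factor locality check. Your straddling argument for $f_0$ ($p_b-k<a\le p_b$) is exactly the paper's.

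The one real difference is the step you flag as the main obstacle, the locality of $g_0$. The paper handles it in one line by identifying $H^Ty$ with $h_k(x)y(x)$. Since $\deg y\le m-1$, this is exact polynomial multiplication with no truncation, so $g_0e_a=(v_a+v_{a+1})/h_k$ is an exact quotient. Because $h_k(0)=1$ and $\deg h_k=k$, the lowest monomial of $w$ equals that of $y$, and the highest is that of $y$ shifted down by $k$. This exactness is also what stops the forward recurrence from leaking to the right.

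Your two-sided elimination is the matrix form of the same fact and does work. However, the triangular block you need on the right is not in ``the last $k$ columns of $H^T$''. A $k\times k$ corner only constrains $w_{m-k},\dots,w_{m-1}$, which says nothing when the window sits far from the right end. What you need is the last $m$ rows of $H^T$, i.e.\ rows $k,\dots,n-1$, which are unit upper triangular thanks to the $x^k$ term. Row $a+k$ reads $w_a=y_{a+k}+w_{a+k-1}+w_{a+k}$ (out-of-range terms zero), so solving downward from $a=m-1$ shows that $w_a$ depends only on $y_{\ge a+k}$. Forward substitution $w_a=y_a+w_{a-1}+w_{a-k}$ shows that it depends only on $y_{\le a}$. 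Hence $\supp w\subseteq[\min\supp y,\max\supp y-k]$. The identity $\ker G=\im H^T$ enters only to guarantee that a solution exists, so that both solves return the same $w$.

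With that correction you recover the paper's two-period window. Your column bound $\le 2L$ and row bound $\le 2^k$ (two adjacent periods, each containing $2^{k-1}$ of the $v_a$) then finish the argument. The polynomial route is shorter and more conceptual; yours is more elementary and makes the explicit recursions visible.
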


\begin{proof}
See Appendix \ref{app:depth_proof}.
\end{proof}

\begin{lemma}[Row--column propagation]\label{lem:propagation}
Let $a(i)$ be the unique $a$ with $i\in\supp(v_a)$. Under any subset of the \textsf{CNOT}s of $U_\Phi$, an $X$ in source column $b$ spreads only into target column $p_b$, a $Z$ in target row $i$ spreads only into source row $a(i)$, and the check sectors stay decoupled from the principal sectors.
\end{lemma}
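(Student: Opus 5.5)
Since $U_\Phi$ is a product of \textsf{CNOT}s, all of which commute with each other, I would track Pauli propagation one gate at a time and show that the claimed invariants survive any ordered subset of gates. Under a single $\textsf{CNOT}_{\sigma\to\ell}$, an $X$ on the control $\sigma$ also appears on the target $\ell$, and a $Z$ on the target $\ell$ also appears on the control $\sigma$. Nothing else propagates. The proof therefore comes down to reading off, from $\Phi=(g_1\otimes f_1)\oplus(g_0\otimes f_0)$, which control--target pairs $(\sigma,\ell)$ occur, namely those with $\Phi_{\ell\sigma}=1$.

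\textbf{Step 1: block structure.} Because $\Phi$ is a direct sum, every gate has either its control in the source principal sector and its target in the La-cross principal sector (from the block $g_1\otimes f_1$), or its control in the source check sector and its target in the La-cross check sector (from the block $g_0\otimes f_0$). No gate couples a principal qubit to a check qubit. Any propagated Pauli therefore stays within the sector type where it started, so the check sectors remain decoupled from the principal sectors. In particular, an $X$ that began in the principal sector never enters a check sector, and likewise for $Z$.

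\textbf{Step 2: principal-sector entries.} I index source principal qubits by $(a,b)$, with row $a$ and column $b$, and target principal qubits by $(i,j)$. The entry $(g_1\otimes f_1)_{(i,j),(a,b)}=(g_1)_{ia}(f_1)_{jb}$ equals $(v_a)_i\,\delta_{j,p_b}$, using $g_1e_a=v_a$ and $f_1e_b=e_{p_b}$. So the gate with control $(a,b)$ has targets exactly $\supp(v_a)\times\{p_b\}$. I would also check the orientation of the tensor factors against the convention that rows carry the $g$-index, so that this matches the geometric description given after \cref{eq:explicit-maps}.

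\textbf{Step 3: the propagation claims.} For $X$: an $X$ at source qubit $(a,b)$ with $b$ fixed spreads only to qubits $(i,p_b)$, all of which lie in target column $p_b$. Targets never act as controls, since all controls are source qubits and all targets are La-cross qubits. No further spreading can occur, so the claim holds for any subset and any order of gates. For $Z$: a $Z$ at target qubit $(i,j)$ spreads to every control $(a,b)$ with $(v_a)_i=1$ and $p_b=j$. Because the $v_a$ have pairwise disjoint supports (\cref{prop:disjoint} and the partition property in \cref{app:disjoint-proof}), there is exactly one such $a$, namely $a(i)$. Again, source qubits are never targets, so there is no secondary spreading. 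Hence $Z$ reaches only source row $a(i)$, in fact at most the single qubit $(a(i),b)$ with $p_b=j$.

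The only point needing care is in Step 3: well-definedness of $a(i)$ relies on the supports of the $v_a$ partitioning all seed coordinates, which the explicit construction guarantees. If some $i$ were left uncovered, a $Z$ on row $i$ would simply not spread, and the claim would still hold vacuously. Beyond that, the main potential pitfall is the index convention for $g\otimes f$ versus row and column, which I would fix once at the start.
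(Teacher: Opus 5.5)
Your proof is correct and takes essentially the same route as the paper. The block-diagonal form of $\Phi$ decouples the sectors, and the entry $(g_1\otimes f_1)_{(i,j),(a,b)}=(v_a)_i\,\delta_{j,p_b}$ gives the column statement for $X$. Because all controls are source qubits and all targets are La-cross qubits, propagation is a single step, and the partition property of the $v_a$ sends a target $Z$ at $(i,j)$ back only to $(a(i),b)$ with $p_b=j$; the paper writes this as $(g_1\otimes f_1)^T(e_i\otimes e_j)=e_{a(i)}\otimes f_1^Te_j$, which is the same computation.
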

\begin{proof}
$a(i)$ is well defined since the $v_a$ partition the seed coordinates. All \textsf{CNOT}s point from source to target, so propagation is a single step along the block-diagonal $\Phi$. In the principal sector, $(g_1\otimes f_1)(e_a\otimes e_b)=v_a\otimes e_{p_b}$, so source qubit $(a,b)$ controls \textsf{CNOT}s onto exactly $\supp(v_a)\times\{p_b\}$. Conversely, $(g_1\otimes f_1)^T(e_i\otimes e_j)=e_{a(i)}\otimes f_1^Te_j$, so target qubit $(i,j)$ receives at most one \textsf{CNOT}, from $(a(i),b)$ with $p_b=j$.
\end{proof}

\subsection{Magic state transfer and fault distance}\label{sec:transfer}

The interface \textsf{CNOT} can be used to transfer any resource state from a surface code into the primitive La-cross block using a teleportation circuit (see \cref{fig:interface-transfer}). In particular, high-fidelity magic state $\ket T=(\ket0+e^{i\pi/4}\ket1)/\sqrt2$ prepared in surface codes, allows universal computation. These can be prepared, for example, by cultivation \cite{gidney2024magic, sahay2026fold} or distillation \cite{bravyi2005universal}.

\begin{figure}[!htbp]
 \centering
 \begin{tikzpicture}[lc base,x=1cm,y=1cm]
\path[use as bounding box] (0,0) rectangle (11.0,2.6);
% labels
\node[anchor=east,lc note] at (2.65,1.85) {surface code: $\lvert T\rangle_S$};
\node[anchor=east,lc note] at (2.65,.65) {La-cross: $\lvert0\rangle_L$};
% measurement meter (X basis)
\node[lc box,minimum width=.75cm,minimum height=.5cm] (mx) at (6.4,1.85) {};
\draw[lcInk] ([xshift=-.22cm,yshift=-.12cm]mx.center) arc[start angle=180,end angle=0,radius=.22cm];
\draw[lcInk] ([yshift=-.12cm]mx.center) -- ++(55:.3cm);
\fill[lcInk] ([yshift=-.12cm]mx.center) circle[radius=.8pt];
\node[lc note,anchor=south] at (mx.north) {$\overline X_S$};
% classically controlled Pauli correction, directly below the meter
\node[lc box,minimum width=.9cm,minimum height=.5cm] (corr) at (6.4,.65) {$\overline Z_L$};
% quantum wires
\draw[lcInk] (2.9,1.85)--(mx.west);
\draw[lcInk] (2.9,.65)--(corr.west);
\draw[lcInk] (corr.east)--(8.3,.65);
% CNOT
\draw[lcInk] (4.1,.65)--(4.1,1.85);
\fill[lcInk] (4.1,1.85) circle[radius=3pt];
\draw[lcInk,fill=white] (4.1,.65) circle[radius=5.5pt];
\draw[lcInk] (3.91,.65)--(4.29,.65);\draw[lcInk] (4.1,.46)--(4.1,.84);
% classical wire
\draw[lcInk,double,double distance=1pt] (mx.south)--(corr.north);
% output
\node[lc note,anchor=west] at (8.35,.65) {$\lvert T\rangle_L$};
\end{tikzpicture}
 \caption{\textbf{Magic-state transfer.} The surface code patch controls an interface \textsf{CNOT} into an initialised La-cross logical qubit. Measuring $\Xl_\src$ and applying a $\Zl_L$ correction teleport $\lvert T\rangle$.}
 \label{fig:interface-transfer}
\end{figure}

\Cref{lem:propagation} shows that the interface never spreads an error beyond one principal row or column, while \cref{lem:principal} shows that every logical error occupies at least $D$ of them. To turn this into a fault distance, we consider the following protocol. Both blocks start in ideal encoded states and undergo $r\ge0$ rounds of bare-ancilla syndrome extraction, with any \textsf{CNOT} order. The interface $U_\Phi$ is then applied in any schedule, followed by $r$ further rounds on the target, an ideal round of target checks and a destructive $X$ readout of the source. A fault is an arbitrary Pauli on one qubit at one time step, an arbitrary two-qubit Pauli after one \textsf{CNOT}, or a flipped measurement outcome, including the source readout; decoding and frame updates are noiseless. Detectors compare consecutive check values and follow \cref{eq:stab-x,eq:stab-z} across the interface. The transfer fails if the frame bit, given by the readout of $\overline X_\src$, is wrong, or if the target is left with a nontrivial logical error. The fault distance is the minimum number of faults that trigger no detector and cause a failure.

\begin{proposition}[Transfer fault distance]\label{prop:transfer-fd}
The transfer through the interface of \cref{eq:explicit-maps} has fault distance $D$ in the model above, for any schedule of $U_\Phi$ and of the syndrome extraction circuits.
\end{proposition}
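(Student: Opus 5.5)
We prove the two bounds separately: an upper bound $d_{\rm fault}\le D$ from an explicit undetectable failure, and a lower bound $d_{\rm fault}\ge D$ by pushing every fault to the end of the circuit and counting principal rows and columns.

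\emph{Upper bound.} We exhibit $D$ faults that trigger no detector and cause failure. Place a logical $\Zl(N)$ representative of minimal weight on the La-cross block after the interface and the last round, but before the ideal round. For example, take $Z(c_{e_i}\epsilon_j^T,0)$, which has weight $D$ in the primitive family. This is a product of $D$ single-qubit faults, it commutes with all checks, and it leaves a nontrivial logical error. Equivalently, a $Z$ logical on one source column, of weight $D$, flips the $\overline X_\src$ frame bit. Hence $d_{\rm fault}\le D$.

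\emph{Lower bound.} Fix any undetectable fault set $\mathcal F$ of size $<D$. Propagate each fault forward to the end of the circuit, through the remaining syndrome-extraction \textsf{CNOT}s and the interface, obtaining a residual Pauli on source and target. Measurement flips are absorbed into detector comparisons in the standard way. Because no detector fires and the final target round is ideal, the residual target Pauli, after multiplying by stabilisers implied by the detector structure across the interface (\cref{eq:stab-x,eq:stab-z}), must commute with all target checks. The residual source $Z$ part relevant to the readout must likewise commute with the source $X$ checks, so it is either a source logical or trivial.

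The key counting step is to track principal rows and columns.
\begin{itemize}
\item Syndrome extraction with bare ancillas is the hook-error problem. A single fault in a weight-six check circuit can spread to several data qubits, but for an $X$ check of $\operatorname{HGP}(H,H)$ the principal-sector support of $H\otimes I_n$ lies in a single principal column. So an $X$-type hook error stays inside one principal column, and dually a $Z$-type hook stays inside one principal row. The check-sector support is irrelevant by \cref{lem:principal}.
\item Through the interface, \cref{lem:propagation} says an $X$ fault in source column $b$ spreads only into target column $p_b$, and a $Z$ fault in target row $i$ reaches only source row $a(i)$. The check sectors stay separate.
\end{itemize}
Together, these show that each fault contributes its $X$ part to at most one target principal column and its $Z$ part to at most one target principal row. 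The same holds on the source, with source columns and rows.

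By \cref{lem:principal} applied to the target, with seed distance $D$, a nontrivial target $X$ logical occupies $\ge D$ principal columns and a nontrivial target $Z$ logical occupies $\ge D$ principal rows. Applied to the source, whose repetition seed has distance $D$, a wrong $\overline X_\src$ readout requires a $Z$ logical on $\ge D$ source rows. With $<D$ faults neither can occur, unless the errors cancel in a way that still produces a failure.

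The main obstacle is correlated failures across the two blocks. A residual $X$ on the source that is a source logical propagates through $U_\Phi$ to $\overline X_\src\Xl(qu^T)$. A target $Z$ residual induces, through \cref{eq:logical-action}, a source $\overline Z_\src$ frame flip. We handle this by using the logical action: a residual that is harmless in the joint code must be a product of stabilisers, so any failure demands either a nontrivial logical on one block or the correlated pair. In each case we lower bound the support using the single-row and single-column propagation facts.

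The delicate part is faults occurring \emph{during} a partially applied schedule of $U_\Phi$. Here we rely on the phrase ``any subset of the \textsf{CNOT}s'' in \cref{lem:propagation}: a fault mid-interface is equivalent to a fault before it, followed by propagation along a subset of the gates. It therefore still touches one row or column. This gives $d_{\rm fault}\ge D$ for every schedule.
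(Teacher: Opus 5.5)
Your proposal follows the paper's argument. You push all faults to the end and use detector telescoping across the interface to get that $E_L$ commutes with the target checks and that the $Z$ part of $E_S$ commutes with the source $X$ checks. You then invoke \cref{lem:principal}, so a failure needs $D$ source principal rows or $D$ target principal columns or rows, and you argue that each fault raises each count by at most one. Your tightness witness, a weight-$D$ representative $Z(c_{e_i}\epsilon_j^T,0)$ inserted before the ideal round or $\overline Z_\src$ on one source column, works as well as the paper's $D$ readout flips. The ``correlated failures'' paragraph is not actually an obstacle. After propagation to the end, failure means that the $Z$ part of $E_S$ is $\overline Z_\src$ or that $E_L$ is a nontrivial logical, and each alone already forces $D$ rows or columns. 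Cancellations are also harmless, because the rows or columns occupied by a product lie in the union of those occupied by its factors.

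The one step you assert without justification is the interface-specific one: faults on the interface \textsf{CNOT}s themselves. \Cref{lem:propagation} says how a single-qubit Pauli spreads. A fault after an interface gate, however, is an arbitrary two-qubit Pauli on a source qubit $\sigma=(a,b)$ and a target qubit $\ell=(i,j)$ at once. Rewriting it as a Pauli before the gate leaves it two-qubit, so ``it therefore still touches one row or column'' does not follow from the lemma. Its $X$ parts reach target columns $p_b$ (spread from $\sigma$) and $j$, and its $Z$ parts can reach source rows $a$ and $a(i)$. A single fault would therefore add two to a count unless $j=p_b$ and $a(i)=a$. The missing observation, stated explicitly in the paper, is that $\Phi_{\ell\sigma}=1$ forces $\ell\in\supp(v_a)\times\{p_b\}$, i.e.\ $j=p_b$ and $a(i)=a$, so the two halves are aligned. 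Check-sector interface faults never reach the principal sectors. With this added, your argument is complete.
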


\begin{proof}
See appendix \ref{app:transfer-fd-proof}.
\end{proof}

\begin{example}[{$[[25,1,4]]\to[[65,9,4]]$ interface}]\label{ex:interface}
For $k=3$, $n=7$, $D=4$, $q=e_1$ and $u=(1,1,1)^T$, the maps
\begin{align}
  f_1&=(e_0,e_1,e_2,e_5), & f_0&=(e_0,e_1,e_2),\\
  g_1&=(e_0+e_3,\,e_1,\,e_2+e_4,\,e_5+e_6), & g_0&=(e_0,e_1,e_2+e_3)
\end{align}
satisfy \cref{eq:chain} and implement controlled-$\overline{X}(e_1(1,1,1))$, a single addressed qubit in an adapted basis, with $7\cdot4+4\cdot3=40$ \textsf{CNOT}s of maximum degree two, i.e.\ two layers of $25$ and $15$ gates. Here $(p_b)_b=(0,1,2,5)$ and $(a(i))_i=(0,1,2,0,2,3,3)$, and \cref{prop:transfer-fd} gives fault distance $4$ for any schedule (error propagation is illustrated in \cref{fig:interface_description}). An exhaustive Stim search over fault sets of weight up to $3$ finds no undetected failure for $r\in\{0,1,2,4\}$, for randomly sampled schedules of $U_\Phi$ and of the syndrome extraction.
\end{example}

\begin{figure}
    \centering
    % Interface example in one figure: CNOT pattern (a) and fault distance (b)-(d).
% 40-CNOT interface [[25,1,4]] -> [[65,9,4]], q = e_1, u = (1,1,1).
%  A = (g_1 (x) f_1) (+) (g_0 (x) f_0). Source qubit (a,b) is drawn as a square
%  beside its target column (p_b, resp. b) at the height of its layer-1 target.
%  principal: v = ({0,3},{1},{2,4},{5,6}), p = (0,1,2,5); check: g_0 e_a = ({0},{1},{2,3}).
%  Target row labels are coloured by a(i) = (0,1,2,0,2,3,3) (principal) and by the
%  coupled source check row (0,1,2,2) (check sector).
% (b) source X hook (1,3),(2,3) -> target (1,5),(2,5),(4,5): stays in column 5.
% (c) target Z hook (3,2),(3,3),(3,5) -> source (0,2),(0,3): stays in row a(3)=0.
% (d) D = 4 readout flips on source column b=0 = Zbar_S, undetected.
% error colours (RGB convention: X red, Z blue), distinct from the row colours of panel (a)
\definecolor{lcXerr}{HTML}{C8102E}
\definecolor{lcZerr}{HTML}{1F3FBF}
\begin{tikzpicture}[lc base,x=1cm,y=1cm,
  e1/.style={line width=.9pt},
  e2/.style={line width=.9pt,dash pattern=on 2.2pt off 1.4pt},
  srcq/.style={rectangle,fill=#1,inner sep=0pt,minimum size=5.6pt},
  fe1/.style={lcGray!45,line width=.6pt},
  fe2/.style={lcGray!45,line width=.6pt,dash pattern=on 2pt off 1.3pt},
  fsrc/.style={rectangle,fill=lcGray!40,inner sep=0pt,minimum size=4.6pt},
  xsrc/.style={rectangle,fill=lcXerr,inner sep=0pt,minimum size=5.4pt},
  zsrc/.style={rectangle,draw=lcZerr,line width=1.1pt,fill=white,inner sep=0pt,minimum size=4.6pt},
  xe/.style={lcXerr,line width=1.1pt},
  ze/.style={lcZerr,line width=1.1pt},
  zed/.style={lcZerr,line width=1.1pt,dash pattern=on 2.2pt off 1.4pt},
  xed/.style={lcXerr,line width=1.1pt,dash pattern=on 2.2pt off 1.4pt},
  band/.style={rounded corners=2pt}]
\path[use as bounding box] (0,0) rectangle (12.8,8.1);
% ===================== (a) CNOT pattern =====================
\node[lc title] at (0,7.85) {(a)};
\begin{scope}[shift={(0,3.8)}]
\def\dx{.42}% source offset to the left of its target column
% ---------------- principal sector ----------------
\def\PX#1{1.05+.8*#1}\def\PY#1{3.25-.43*#1}
\foreach \j in {0,...,6}{
  \node[lc note,text=lcGray] at ({\PX{\j}},3.6) {\j};
  \foreach \i in {0,...,6}{\node[lc dot] at ({\PX{\j}},{\PY{\i}}) {};}}
% target row labels coloured by the source row that couples to them, a(i)
\foreach \i/\clr in {0/lcOrange,1/lcBlue,2/lcTeal,3/lcOrange,4/lcTeal,5/lcPurple,6/lcPurple}{
  \node[lc note,text=\clr] at (.3,{\PY{\i}}) {\i};}
% (source row a, layer-1 row, layer-2 row or -1, colour)
\foreach \j/\b in {0/0,1/1,2/2,5/3}{
  \foreach \ia/\ib/\clr in {0/3/lcOrange,1/-1/lcBlue,2/4/lcTeal,5/6/lcPurple}{
    \coordinate (s) at ({\PX{\j}-\dx},{\PY{\ia}});
    \draw[\clr,e1] (s)--({\PX{\j}},{\PY{\ia}});
    \ifnum\ib>-1
      \draw[\clr,e2] (s)--({\PX{\j}},{\PY{\ib}});
      \fill[\clr] ({\PX{\j}},{\PY{\ib}}) circle[radius=3pt];
    \fi
    \fill[\clr] ({\PX{\j}},{\PY{\ia}}) circle[radius=3pt];
    \node[srcq=\clr] at (s) {};}
  \node[lc note] at ({\PX{\j}-\dx/2},.3) {$b{=}\b$};}
\node[lc note] at ({\PX{3}},3.98) {principal sector};
% ---------------- check sector ----------------
\def\CX#1{7.35+.8*#1}\def\CY#1{2.6-.43*#1}
\foreach \j in {0,...,3}{
  \node[lc note,text=lcGray] at ({\CX{\j}},2.95) {\j};
  \foreach \i in {0,...,3}{\node[lc dot] at ({\CX{\j}},{\CY{\i}}) {};}}
% target check-row labels coloured by the source check row that couples to them
\foreach \i/\clr in {0/lcOrange,1/lcBlue,2/lcTeal,3/lcTeal}{
  \node[lc note,text=\clr] at (6.7,{\CY{\i}}) {\i};}
\foreach \j in {0,1,2}{
  \foreach \ia/\ib/\clr in {0/-1/lcOrange,1/-1/lcBlue,2/3/lcTeal}{
    \coordinate (s) at ({\CX{\j}-\dx},{\CY{\ia}});
    \draw[\clr,e1] (s)--({\CX{\j}},{\CY{\ia}});
    \ifnum\ib>-1
      \draw[\clr,e2] (s)--({\CX{\j}},{\CY{\ib}});
      \fill[\clr] ({\CX{\j}},{\CY{\ib}}) circle[radius=3pt];
    \fi
    \fill[\clr] ({\CX{\j}},{\CY{\ia}}) circle[radius=3pt];
    \node[srcq=\clr] at (s) {};}
  \node[lc note] at ({\CX{\j}-\dx/2},.95) {$b{=}\j$};}
\node[lc note] at ({\CX{1.5}},3.33) {check sector};
% ---------------- legend ----------------
\node[srcq=lcInk] at (10.85,3.3) {};
\node[anchor=west,lc note] at (11.05,3.3) {surface code};
\fill[lcInk] (10.85,2.85) circle[radius=3pt];
\node[anchor=west,lc note] at (11.05,2.85) {La-cross};
\node[lc dot] at (10.85,2.4) {};
\node[anchor=west,lc note] at (11.05,2.4) {uncoupled};
\draw[lcInk,e1] (10.6,1.85)--(11.05,1.85);
\node[anchor=west,lc note] at (11.1,1.85) {layer 1 (25)};
\draw[lcInk,e2] (10.6,1.4)--(11.05,1.4);
\node[anchor=west,lc note] at (11.1,1.4) {layer 2 (15)};
\fill[lcXerr] (10.85,.95) circle[radius=2.8pt];
\node[anchor=west,lc note] at (11.05,.95) {$X$ error};
\draw[lcZerr,line width=1.1pt,fill=white] (10.85,.5) circle[radius=2.4pt];
\node[anchor=west,lc note] at (11.05,.5) {$Z$ error};
\end{scope}
% ===================== (b)-(d) fault distance =====================
% faint copy of the principal interface (local coordinates)
\def\BX#1{.55*#1}\def\BY#1{3.0-.34*#1}\def\bdx{.27}
\def\faint{%
  \foreach \j in {0,...,6}{\foreach \i in {0,...,6}{\node[lc dot] at ({\BX{\j}},{\BY{\i}}) {};}}
  \foreach \j in {0,1,2,5}{\foreach \ia/\ib in {0/3,1/-1,2/4,5/6}{
    \draw[fe1] ({\BX{\j}-\bdx},{\BY{\ia}})--({\BX{\j}},{\BY{\ia}});
    \ifnum\ib>-1 \draw[fe2] ({\BX{\j}-\bdx},{\BY{\ia}})--({\BX{\j}},{\BY{\ib}});\fi
    \node[fsrc] at ({\BX{\j}-\bdx},{\BY{\ia}}) {};}}}
% ---- (b) X keeps its column ----
\node[lc title] at (0,3.6) {(b)};
\begin{scope}[shift={(.75,0)}]
 \fill[lcXerr!13,band] ({\BX{5}-\bdx-.17},{\BY{0}+.2}) rectangle ({\BX{5}+.17},{\BY{6}-.2});
 \faint
 \draw[xe] ({\BX{5}-\bdx},{\BY{1}})--({\BX{5}},{\BY{1}});
 \draw[xe] ({\BX{5}-\bdx},{\BY{2}})--({\BX{5}},{\BY{2}});
 \draw[xed] ({\BX{5}-\bdx},{\BY{2}})--({\BX{5}},{\BY{4}});
 \foreach \i in {1,2}{\node[xsrc] at ({\BX{5}-\bdx},{\BY{\i}}) {};}
 \foreach \i in {1,2,4}{\fill[lcXerr] ({\BX{5}},{\BY{\i}}) circle[radius=2.8pt];}
 \node[lc note,text=lcXerr,anchor=east] at ({\BX{5}-\bdx-.2},{\BY{1.5}}) {hook};
\end{scope}
\node[lc note] at (2.4,.45) {$X$: column $b\mapsto$ column $p_b$};
% ---- (c) Z keeps its row ----
\node[lc title] at (4.3,3.6) {(c)};
\begin{scope}[shift={(5.05,0)}]
 \fill[lcZerr!13,band] ({-\bdx-.17},{\BY{3}+.17}) rectangle ({\BX{6}+.17},{\BY{3}-.17});
 \fill[lcZerr!13,band] ({-\bdx-.17},{\BY{0}+.17}) rectangle ({\BX{6}+.17},{\BY{0}-.17});
 \faint
 \draw[zed] ({\BX{2}-\bdx},{\BY{0}})--({\BX{2}},{\BY{3}});
 \draw[zed] ({\BX{5}-\bdx},{\BY{0}})--({\BX{5}},{\BY{3}});
 \foreach \j in {2,3,5}{\draw[lcZerr,line width=1.1pt,fill=white] ({\BX{\j}},{\BY{3}}) circle[radius=2.4pt];}
 \foreach \j in {2,5}{\node[zsrc] at ({\BX{\j}-\bdx},{\BY{0}}) {};}
\end{scope}
\node[lc note] at (6.7,.45) {$Z$: row $i\mapsto$ row $a(i)$};
% ---- (d) tight witness ----
\node[lc title] at (8.6,3.6) {(d)};
\begin{scope}[shift={(9.35,0)}]
 \fill[lcZerr!13,band] ({-\bdx-.17},{\BY{0}+.2}) rectangle ({-\bdx+.17},{\BY{5}-.2});
 \faint
 \foreach \i in {0,1,2,5}{\node[zsrc] at ({-\bdx},{\BY{\i}}) {};}
\end{scope}
\node[lc note] at (11.0,.45) {$D$ readout flips $=\overline Z_S$};
\end{tikzpicture}
\caption{\textbf{Surface-code interface for
\( [[25,1,4]]\to[[65,9,4]] \)} with \(q=e_1\) and
\(u=(1,1,1)^T\).
(a) Two-layer CNOT pattern implementing the interface.
(b) \(X\)-error propagation preserves principal columns.
(c) \(Z\)-error propagation preserves principal rows.
(d) A weight-\(D\) source readout error realizes
\(\overline Z_\src\).}
    \label{fig:interface_description}
\end{figure}

\begin{remark}[Addressable \textsf{S} gates]
Let $F = F_\tau \otimes F_\src$ denote the Hadamard folds on both blocks.
The circuit $U_\Phi' = F U_\Phi F = \prod_{\Phi_{\ell \sigma}=1} \textsf{CNOT}_{\tau(\ell)\to\tau_\src(\sigma)}$ is a reversed, relabelled copy of $U_\Phi$. Conjugating \cref{eq:logical-action-S,eq:logical-action} by $F$ shows that it implements a logical CNOT with control $\Zl(uq^T)$ on the La-cross block, i.e.\ $\textsf{CNOT}_{(i,j)\to \src}$ for $(q,u) = (e_j, e_i)$. \Cref{lem:propagation} and \Cref{prop:transfer-fd} carry over with $X \leftrightarrow Z$ and rows $\leftrightarrow$ columns exchanged. Preparing the surface code in $\ket{Y}=\textsf{S}\ket{+}$, applying $U_\Phi'$ and measuring $\Zl_S$, followed by a $\Zl$ correction depending on the outcome, then implements an addressed logical \textsf{S}, providing an alternative to compiling an addressed \textsf{S} from the phase fold. This can additionally be used to teleport the \textsf{T} gate using $\ket{T}$ and $\ket{Y}$ states.
\end{remark}

\begin{remark}
A closely related surface code gadget is also developed in a forthcoming revision of \cite{pecorari2025addressable}, where it is used as an auxiliary gadget for addressable Clifford computation on La-cross logical qubits. Here the construction is aimed at transferring an externally prepared magic state into a chosen La-cross logical qubit, and we additionally give an explicit chain-map construction, a bounded-depth implementation and a proof of fault distance~$D$.
\end{remark}

\section{Discussion}

Our results extend the computational toolkit for La-cross codes by identifying a set of fast native primitives for Clifford operations, logical measurements, and resource-state injection. Seed automorphisms and fold-transversal gates generate the full logical Clifford group for the primitive complete period family, while disjoint logical representatives enable parallel gauging and a direct space--time tradeoff for logical measurements. The surface code interface then provides an addressable route for importing externally prepared resource states. Together, these ingredients give a complete Clifford+$T$ instruction set within a compact, low-connectivity architecture suited to neutral-atom hardware.

The present work is focused on small finite-size codes rather than asymptotically high-rate constructions. Several extensions are therefore natural. The primitive complete period condition is sufficient but not necessary for the automorphism and disjoint-representative properties, suggesting a broader class of useful La-cross instances. It will also be important to develop a full circuit-level analysis of the parallel measurement protocol and to quantify the resulting space--time tradeoffs under, e.g. realistic neutral atom noise and scheduling constraints.

\section*{Acknowledgements}

We thank the QPerfect research team for many stimulating discussions. We are grateful to Shannon Whitlock, Stanimir Kondov, and Jerome Dubail for numerous conversations related to this work. A.D.F. was supported by the European Commission via the Marie Skłodowska-Curie Action (MSCA) Gen-Q (Generation Quantum) programme, grant agreement number 101217386. L.P. was supported by the University of Strasbourg ITT program QMAT and the Interreg programme UpQuantVal. G.P was supported by the French National Research Agency under the Investments of the Future Program projects ANR-21-ESRE-0032 (aQCess), ANR-22-CE47-0013-02 (CLIMAQS), and ANR-22-CMAS-0001 France 2030 (QuanTEdu-France). T.H. was supported by the Cifre grant collaboration 24/RCH/374 between QPerfect and the University of Strasbourg. 

We also acknowledge the use of GPT-5.6 Sol and Claude Opus 4.8 for assistance with presentation, including figure generation, and for numerical checks of the fault distance of the surface code interface gadget. All technical claims and final results were independently checked and are the responsibility of the authors.

\appendix

\section{Disjoint logical representatives}
\label{app:disjoint-proof}

In this appendix we prove \cref{prop:disjoint}. We consider an $X$-type rank one logical $\Xl(qu^T)$ with nonzero $q,u\in\F_2^k$.

In each primitive period, the columns of $G$ run through the
nonzero vectors of $\F_2^k$ once. For fixed $q$, partition these vectors into the singleton $\{q\}$ and the two element orbits $\{x,x+q\}$ under translation by $q$. This gives
\begin{equation}
    1+\frac{2^k-2}{2}=2^{k-1}
\end{equation}
disjoint subsets. Let $v_a$ denote the coordinate indicator of one such subset. Then $Gv_a=q$, since the singleton contributes $q$, while each pair contributes $x+(x+q)=q$.

Repeating the construction independently in each of the $s$ periods gives $D=s2^{k-1}$ vectors $v_a$ with pairwise disjoint supports and $Gv_a=q$. Hence $(v_a c_u^T,0)$ are pairwise disjoint representatives of $\Xl(qu^T)$. Since $\wt(c_u)=D$ and each $v_a$ has weight one or two, each representative has weight $D$ or $2D$.

For any two choices,
\begin{equation}
    G(v_a+v_b)=0,
\end{equation}
so $v_a+v_b=H^Ty$ for some $y$, since
$\ker G=\operatorname{im}H^T$. Using $Hc_u=0$,
\begin{equation}
    \bigl((v_a+v_b)c_u^T,0\bigr) = \bigl(H^Tyc_u^T,yc_u^TH^T\bigr)
\end{equation}
is an $X$ stabiliser. Thus all representatives are equivalent. The $Z$ statement follows similarly.

\section{Automorphism and Clifford details}
\label{app:automorphisms}

\subsection{Finite automorphism data}
\label{app:aut-check}

To determine the classical seed automorphism group, we first compute the columns $g_j$ of the generator matrix $G$ and record how many times each vector $v\in\F_2^k$ occurs. A linear transformation $A\in\GL(k,2)$ is induced by a coordinate permutation precisely when it preserves these column multiplicities, that is,
\begin{equation}
\mu_n(Av)=\mu_n(v) \qquad \text{for all }v\in\F_2^k.
\end{equation}
So that rather than searching over all $n!$ permutations of the physical coordinates, we enumerate the much smaller group $\GL(k,2)$ and keep those elements satisfying this condition. This is the method we use to obtain the finite automorphism groups in the main body.

For $k=4$, the group order depends only on $r=n\bmod15$. We find
\begin{equation}
\begin{array}{c|rrrrrrrr}
r&0&1&2&3&4&5&6&7\\\hline
|\Gamma_{4,n}|&20160&1344&192&48&24&12&8&8\\[2pt]
r&8&9&10&11&12&13&14&\\\hline
|\Gamma_{4,n}|&8&8&12&24&48&192&1344&
\end{array}
\end{equation}

Some reducible examples are in addition:
\begin{center}
\begin{tabular}{rrrr}
\toprule
$(k,n)$ & $d(C_{k,n})$ & $|\Gamma_{k,n}|$
& $|\operatorname{Aut}_T(H)|$\\
\midrule
$(5,20)$&9&48&1\\
$(5,21)$&10&1008&1\\
$(5,22)$&10&48&1\\
$(5,42)$&20&1008&1\\
\bottomrule
\end{tabular}
\end{center}
These computations illustrate that the logical automorphism group can remain large even when the Tanner graph has no nontrivial permutation symmetry.

\subsection{Span of the phase-fold orbit}
\label{app:phase-orbit}

We prove here the spanning statement used in the proof of theorem \ref{thm:clifford}. Conjugating the phase fold by seed automorphisms produces lower symplectic shears with symmetric matrices $B_C$, indexed by $C\in\GL(k,2)$. To obtain arbitrary lower shears, it is enough to show that these matrices span the full space of symmetric operators on the $k^2$ logical qubits,
\begin{equation}
    \text{span}\{B_C:C\in\GL(k,2)\} = \Sym(\F_2^{k^2}).
\end{equation}

Let $K=k^2$. In the basis $E_{ij}$, the matrix associated with a conjugated phase fold has entries
\begin{equation}
    (B_C)_{(i,j),(p,q)}=C_{iq}C_{pj}.
\end{equation}
We relabel the indices by
\begin{equation}
    \mathcal R(B)_{(i,q),(p,j)} = B_{(i,j),(p,q)}.
\end{equation}
Then
\begin{equation}
    \mathcal R(B_C) = \text{vec}(C)\text{vec}(C)^T.
\end{equation}
Thus the spanning problem is equivalent to the rank one symmetric matrices $\text{vec}(C)\text{vec}(C)^T$, with $C\in\GL(k,2)$ spanning the full symmetric matrix space.

Equivalently, a linear functional annihilating all of these matrices would give a multilinear polynomial of degree at most two in the entries of $C$ that vanishes for every $C\in\GL(k,2)$. For $k=3$, we check this directly. Evaluating the degree at most two multilinear monomials on all $168$ elements of $\GL(3,2)$ gives full rank. Hence no nonzero such functional exists, and the matrices $B_C$ span the full symmetric space.

For $k>3$, the same conclusion follows by reduction to the $k=3$ case. Any chosen pair of matrix entries can be placed inside a $3\times3$ row--column block, while the remaining entries are fixed so that the complementary block is invertible. Restricting a putative annihilating polynomial to this family of matrices reduces the problem to the $3\times3$ case, forcing its linear and quadratic coefficients to vanish. Therefore
\begin{equation}
    \text{span}\{B_C:C\in\GL(k,2)\} = \Sym(\F_2^{k^2}),
\end{equation}
as required in the proof of \cref{thm:clifford}.

\section{Surface code interface}
\subsection{Proof of \Cref{prop:depth}}\label{app:depth_proof}
All CNOTs of $U_\Phi$ act from source to target and thus commute. By K\"{o}nig's edge-colouring theorem they fit in $\Delta(\Phi)$ layers of disjoint CNOTs, where $\Delta(\Phi)$ is the largest row or column weight of $\Phi$, i.e.\ the largest number of CNOTs on one qubit. Since weights multiply under $\otimes$, it suffices that every row and column of $f_1,g_1,f_0,g_0$ has $O_k(1)$ nonzero entries. This follows from locality: $f_1e_b=e_{p_b}$ and $g_1e_a=v_a$ are local by construction; $(f_0)_{jb}\neq0$ only if seed check $j$ straddles the cut after $p_b$, i.e.\ $j\le p_b<j+k$; and, writing $H^Ty\leftrightarrow h_k(x)y(x)$ as polynomials, $g_0e_a=(v_a+v_{a+1})/h_k$ is supported between the extreme positions of $v_a$ and $v_{a+1}$, hence within two adjacent periods since the $v_a$ are ordered period by period. Every image therefore lies within two periods of $p_b$ or $v_a$, and each period contains exactly $2^{k-1}$ of the $p_b$ and $2^{k-1}$ of the $v_a$, so all weights are bounded by a function of $k$ alone. Hence $\Delta(\Phi)=O_k(1)$, and the number of CNOTs is at most $\Delta(\Phi)$ times $D^2+(D-1)^2$, i.e.\ $O_k(D^2)$.

\subsection{Proof of \cref{prop:transfer-fd}}\label{app:transfer-fd-proof}
Treat a flipped source readout as a $Z$ before the measurement and propagate all faults to the end, leaving $E_S$ on the source and $E_L$ on the target. Detectors are chained across the interface by \cref{eq:stab-x,eq:stab-z}, and the final source $X$ checks are reconstructed from the readout. Summing the detectors along each stabiliser's time line then cancels all intermediate outcomes. Hence, for an undetected fault set, $E_L$ commutes with all target checks and the $Z$ part of $E_S$ commutes with all source $X$ checks. Failure therefore requires the $Z$ part of $E_S$ to be $\overline Z_S$, or $E_L$ to be a nontrivial logical. Since $d(C_{k,n})=D$, \cref{lem:principal} shows that this occupies at least $D$ source principal rows, or at least $D$ target principal columns ($X$ part) or rows ($Z$ part).

It thus suffices that each fault adds at most one to each of these counts. Bare-ancilla extraction does not propagate errors between data qubits. A single-qubit fault, a syndrome flip or a readout flip therefore trivially satisfies this. A hook lies in the support of one check, whose principal part is one column for an $X$ check and one row for a $Z$ check (\cref{eq:checks}). For a principal interface CNOT fault on $(\sigma,l)=((a,b),(i,p_b))$, $\Phi_{\ell \sigma}=1$ forces $a(i)=a$, so its $X$ parts share column $p_b$ and its $Z$ parts lie in target row $i$ and source row $a(i)$. Check-sector interface faults do not reach the principal sectors. \Cref{lem:propagation} preserves all of this across the remaining interface CNOTs. Hence fewer than $D$ faults cannot cause an undetected failure.

The bound is tight: flipping the readout of the $D$ source qubits in one principal column gives $\overline Z_S$. It commutes with all source $X$ checks, so it triggers no detector, but it flips the frame bit (\cref{fig:interface_description}(d)).

\bibliographystyle{unsrtnat}
\bibliography{refs}

@article{hong2024long,
  title={Long-range-enhanced surface codes},
  author={Hong, Yifan and Marinelli, Matteo and Kaufman, Adam M and Lucas, Andrew},
  journal={Physical Review A},
  volume={110},
  number={2},
  pages={022607},
  year={2024},
  publisher={APS}
}

@article{pecorari2025high,
  title={High-rate quantum LDPC codes for long-range-connected neutral atom registers},
  author={Pecorari, Laura and Jandura, Sven and Brennen, Gavin K and Pupillo, Guido},
  journal={Nature Communications},
  volume={16},
  number={1},
  pages={1111},
  year={2025},
  publisher={Nature Publishing Group UK London}
}

@inproceedings{panteleev2022asymptotically,
  title={Asymptotically good quantum and locally testable classical LDPC codes},
  author={Panteleev, Pavel and Kalachev, Gleb},
  booktitle={Proceedings of the 54th annual ACM SIGACT symposium on theory of computing},
  pages={375--388},
  year={2022}
}

@article{berthusen2025automorphism,
  title={Automorphism gadgets in homological product codes},
  author={Berthusen, Noah and Gullans, Michael J and Hong, Yifan and Mudassar, Maryam and Tan, Shi Jie Samuel},
  journal={arXiv preprint arXiv:2508.04794},
  year={2025}
}

@article{cohen2022low,
  title={Low-overhead fault-tolerant quantum computing using long-range connectivity},
  author={Cohen, Lawrence Z and Kim, Isaac H and Bartlett, Stephen D and Brown, Benjamin J},
  journal={Science Advances},
  volume={8},
  number={20},
  pages={eabn1717},
  year={2022},
  publisher={American Association for the Advancement of Science}
}

@article{berent2026lifted,
  title={Lifted surgery: Fast processing with QLDPC codes},
  author={Berent, Lucas and Cohen, Lawrence Z and Quintavalle, Armanda O},
  journal={arXiv preprint arXiv:2609.11723},
  year={2026}
}

@article{gu2026qgpu,
  title={QGPU: Parallel logic in quantum LDPC codes},
  author={Gu, Boren and Liu, Andy Zeyi and Quintavalle, Armanda O and Xu, Qian and Eisert, Jens and Roffe, Joschka},
  journal={arXiv preprint arXiv:2603.05398},
  year={2026}
}

@article{malcolm2026computing,
  title={Computing efficiently in QLDPC codes},
  author={Malcolm, Alexander J and Glaudell, Andrew N and Fuentes, Patricio and Chandra, Daryus and Schotte, Alexis and DeLisle, Colby and Haenel, Rafael and Ebrahimi, Amir and Roffe, Joschka and Quintavalle, Armanda O and others},
  journal={Nature Communications},
  volume={17},
  pages={7286},
  year={2026},
  publisher={Nature Publishing Group UK London}
}

@article{zhou2026parallel,
  title={Parallel algebraic surgery for qLDPC codes with constant shuttling depth},
  author={Zhou, Tian-Gang and Gu, Boren and Eisert, Jens and Zhao, Chen},
  journal={arXiv preprint arXiv:2609.39874},
  year={2026}
}

@article{bhardwaj2026high,
  title={High-rate qLDPC processors},
  author={Bhardwaj, Aditya and Ma, Muzhou and Meister, Nadine and King, Robbie and Bluvstein, Dolev and Preskill, John and Cain, Madelyn and Xu, Qian and Huang, Hsin-Yuan},
  journal={arXiv preprint arXiv:2607.28795},
  year={2026}
}

@article{baspin2025fast,
  title={Fast surgery for quantum LDPC codes},
  author={Baspin, Nou{\'e}dyn and Berent, Lucas and Cohen, Lawrence Z},
  journal={arXiv preprint arXiv:2510.04521},
  year={2025}
}

@inproceedings{grassl2013leveraging,
  title={Leveraging automorphisms of quantum codes for fault-tolerant quantum computation},
  author={Grassl, Markus and Roetteler, Martin},
  booktitle={2013 IEEE International Symposium on Information Theory},
  pages={534--538},
  year={2013},
  organization={IEEE}
}

@article{sayginel2025fault,
  title={Fault-tolerant logical Clifford gates from code automorphisms},
  author={Sayginel, Hasan and Koutsioumpas, Stergios and Webster, Mark and Rajput, Abhishek and Browne, Dan E},
  journal={PRX Quantum},
  volume={6},
  number={3},
  pages={030343},
  year={2025},
  publisher={APS}
}

@article{pecorari2025addressable,
  title={Addressable gate-based logical computation with quantum LDPC codes},
  author={Pecorari, Laura and Guerci, Francesco Paolo and Perrin, Hugo and Pupillo, Guido},
  journal={arXiv preprint arXiv:2511.06124},
  year={2025}
}

@article{tillich2013quantum,
  title={Quantum LDPC codes with positive rate and minimum distance proportional to the square root of the blocklength},
  author={Tillich, Jean-Pierre and Z{\'e}mor, Gilles},
  journal={IEEE Transactions on Information Theory},
  volume={60},
  number={2},
  pages={1193--1202},
  year={2013},
  publisher={IEEE}
}

@article{quintavalle2022reshape,
  title={Reshape: A decoder for hypergraph product codes},
  author={Quintavalle, Armanda O and Campbell, Earl T},
  journal={IEEE Transactions on Information Theory},
  volume={68},
  number={10},
  pages={6569--6584},
  year={2022},
  publisher={IEEE}
}

@article{quintavalle2023partitioning,
  title={Partitioning qubits in hypergraph product codes to implement logical gates},
  author={Quintavalle, Armanda O and Webster, Paul and Vasmer, Michael},
  journal={Quantum},
  volume={7},
  pages={1153},
  year={2023},
  publisher={Verein zur F{\"o}rderung des Open Access Publizierens in den Quantenwissenschaften}
}

@article{williamson2026low,
  title={Low-overhead fault-tolerant quantum computation by gauging logical operators},
  author={Williamson, Dominic J and Yoder, Theodore J},
  journal={Nature Physics},
  volume={22}, 
  pages={598–603},
  year={2026},
  publisher={Nature Publishing Group UK London}
}

@article{yuan2026parsimonious,
  title={Parsimonious Quantum Low-Density Parity-Check Code Surgery},
  author={Yuan, Andrew C and Cowtan, Alexander and He, Zhiyang and Lin, Ting-Chun and Williamson, Dominic J},
  journal={arXiv preprint arXiv:2603.05082},
  year={2026}
}

@article{gidney2024magic,
  title={Magic state cultivation: growing T states as cheap as CNOT gates},
  author={Gidney, Craig and Shutty, Noah and Jones, Cody},
  journal={arXiv preprint arXiv:2409.17595},
  year={2024}
}

@article{sahay2026fold,
  title={Fold-transversal surface code cultivation},
  author={Sahay, Kaavya and Tsai, Pei-Kai and Chang, Kathleen and Su, Qile and Smith, Thomas B and Singh, Shraddha and Puri, Shruti},
  journal={PRX Quantum},
  volume={7},
  number={3},
  pages={033006},
  year={2026},
  publisher={APS}
}

@article{bravyi2005universal,
  title={Universal quantum computation with ideal Clifford gates and noisy ancillas},
  author={Bravyi, Sergey and Kitaev, Alexei},
  journal={Physical Review A—Atomic, Molecular, and Optical Physics},
  volume={71},
  number={2},
  pages={022316},
  year={2005},
  publisher={APS}
}

@article{xu2024constant,
  title={Constant-overhead fault-tolerant quantum computation with reconfigurable atom arrays},
  author={Xu, Qian and Bonilla Ataides, J Pablo and Pattison, Christopher A and Raveendran, Nithin and Bluvstein, Dolev and Wurtz, Jonathan and Vasi{\'c}, Bane and Lukin, Mikhail D and Jiang, Liang and Zhou, Hengyun},
  journal={Nature Physics},
  volume={20},
  number={7},
  pages={1084--1090},
  year={2024},
  publisher={Nature Publishing Group UK London}
}

@article{xu2025fast,
  title={Fast and parallelizable logical computation with homological product codes},
  author={Xu, Qian and Zhou, Hengyun and Zheng, Guo and Bluvstein, Dolev and Ataides, J Pablo Bonilla and Lukin, Mikhail D and Jiang, Liang},
  journal={Physical Review X},
  volume={15},
  number={2},
  pages={021065},
  year={2025},
  publisher={APS}
}

@article{chakraborty2026no,
  title={No-go theorem on fault tolerant gadgets for multiple logical qubits},
  author={Chakraborty, Aranya and Gottesman, Daniel},
  journal={PRX Quantum},
  volume={7},
  number={3},
  pages={033027},
  year={2026},
  publisher={APS}
}

@article{guyot2026addressability,
  title={On the addressability problem on CSS codes},
  author={Guyot, J{\'e}r{\^o}me and Jaques, Samuel},
  journal={Quantum},
  volume={10},
  pages={2145},
  year={2026},
  publisher={Verein zur F{\"o}rderung des Open Access Publizierens in den Quantenwissenschaften}
}

@article{koh2026achieving,
  title={Achieving the limits of automorphism gates},
  author={Koh, Jin Ming and Majidy, Shayan and Chakraborty, Aranya and Gong, Anqi and Tan, Shi Jie Samuel and Yao, Norman Y},
  journal={arXiv preprint arXiv:2609.19250},
  year={2026}
}

@article{breuckmann2021quantum,
  title={Quantum low-density parity-check codes},
  author={Breuckmann, Nikolas P and Eberhardt, Jens Niklas},
  journal={PRX quantum},
  volume={2},
  number={4},
  pages={040101},
  year={2021},
  publisher={APS}
}

@article{bravyi2024high,
  title={High-threshold and low-overhead fault-tolerant quantum memory},
  author={Bravyi, Sergey and Cross, Andrew W and Gambetta, Jay M and Maslov, Dmitri and Rall, Patrick and Yoder, Theodore J},
  journal={Nature},
  volume={627},
  number={8005},
  pages={778--782},
  year={2024},
  publisher={Nature Publishing Group UK London}
}

@article{horsman2012surface,
  title={Surface code quantum computing by lattice surgery},
  author={Horsman, Dominic and Fowler, Austin G and Devitt, Simon and Meter, Rodney Van},
  journal={New Journal of Physics},
  volume={14},
  number={12},
  pages={123011},
  year={2012},
  publisher={IOP Publishing}
}

@article{cowtan2026parallel,
  title={Parallel logical measurements via quantum code surgery},
  author={Cowtan, Alexander and He, Zhiyang and Williamson, Dominic J and Yoder, Theodore J},
  journal={PRX Quantum},
  volume={7},
  number={2},
  pages={020325},
  year={2026},
  publisher={APS}
}

@article{di2026error,
  title={Error-detected surgery on Iceberg codes},
  author={Di Fini, Andrea and Crew, Samuel and Pecorari, Laura and Pupillo, Guido},
  journal={arXiv preprint arXiv:2608.06187},
  year={2026}
}
\end{document}